\begin{document}

\newcommand*{\cl}[1]{{\mathcal{#1}}}
\newcommand*{\bb}[1]{{\mathbb{#1}}}
\newcommand{\ket}[1]{|#1\rangle}
\newcommand{\bra}[1]{\langle#1|}
\newcommand{\inn}[2]{\langle#1|#2\rangle}
\newcommand{\proj}[2]{| #1 \rangle\!\langle #2 |}
\newcommand*{\tn}[1]{{\textnormal{#1}}}
\newcommand*{\1}{{\mathbb{1}}}
\newcommand{\T}{\mbox{$\textnormal{Tr}$}}
\newcommand{\todo}[1]{\textcolor[rgb]{0.99,0.1,0.3}{#1}}
\newcommand{\norm}[1]{\left\lVert#1\right\rVert}
\newcommand{\RN}[1]{\textup{\uppercase\expandafter{\romannumeral#1}}}

\newcommand\sixvonone{\adjustbox{valign=m, vspace=0pt, margin=1ex}{\includegraphics[width=.310\linewidth]{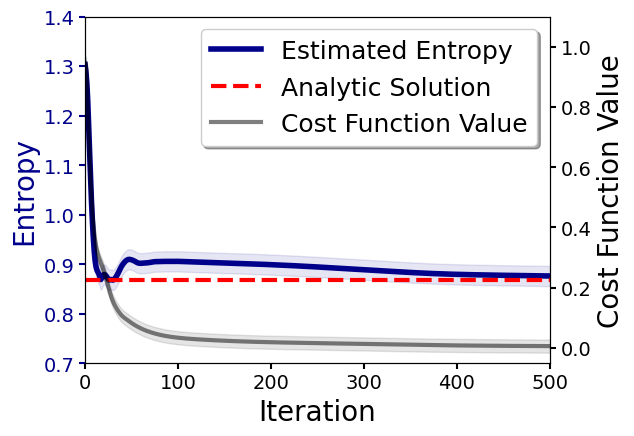}}}
\newcommand\sixrenyione{\adjustbox{valign=m, vspace=0pt, margin=1ex}{\includegraphics[width=.310\linewidth]{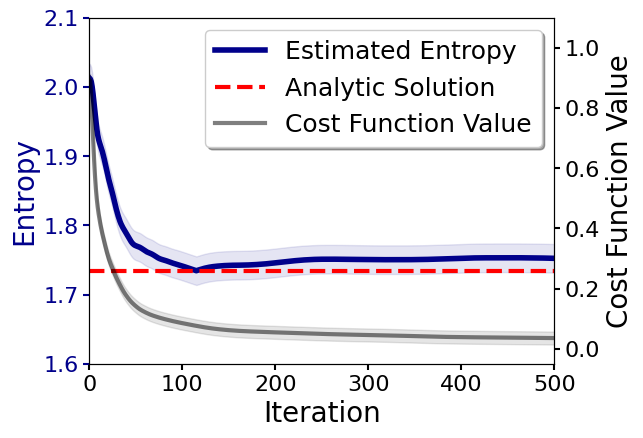}}}
\newcommand\sixtsallisone{\adjustbox{valign=m, vspace=0pt, margin=1ex}{\includegraphics[width=.310\linewidth]{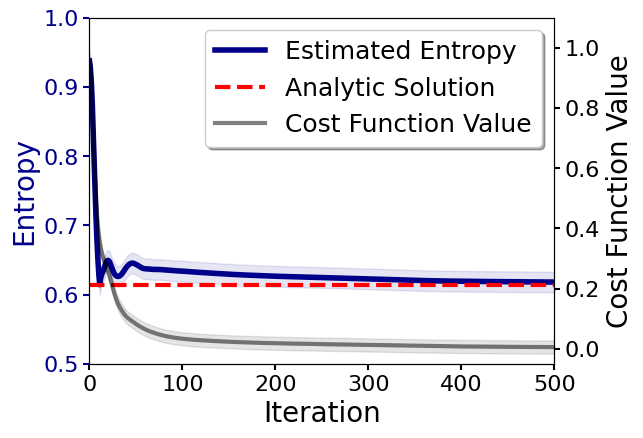}}}
\newcommand\sixtraceone{\adjustbox{valign=m, vspace=0pt, margin=1ex}{\includegraphics[width=.310\linewidth]{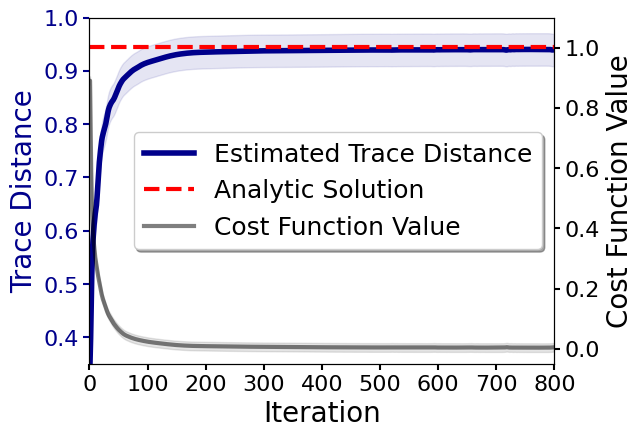}}}
\newcommand\sixfidelityone{\adjustbox{valign=m, vspace=0pt, margin=1ex}{\includegraphics[width=.310\linewidth]{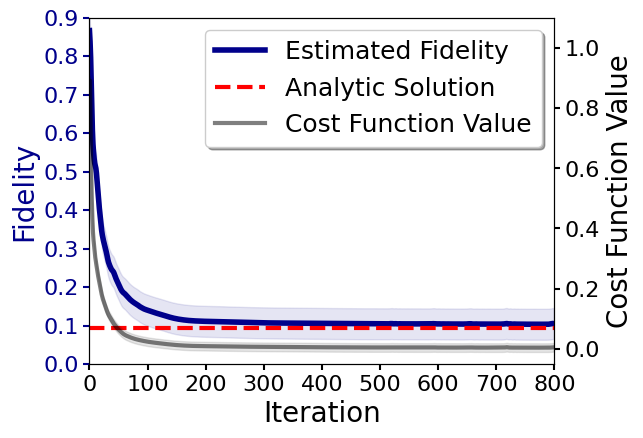}}}

\newcommand\sixvontwo{\adjustbox{valign=m, vspace=0pt, margin=1ex}{\includegraphics[width=.310\linewidth]{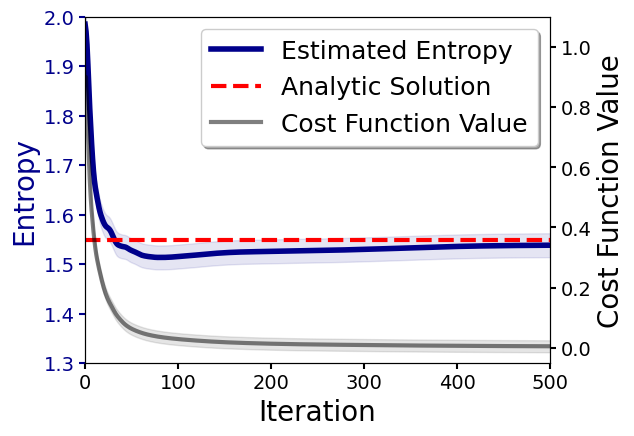}}}
\newcommand\sixrenyitwo{\adjustbox{valign=m, vspace=0pt, margin=1ex}{\includegraphics[width=.310\linewidth]{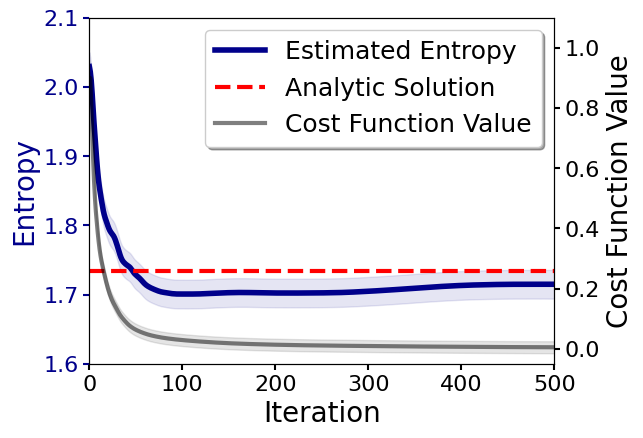}}}
\newcommand\sixtsallistwo{\adjustbox{valign=m, vspace=0pt, margin=1ex}{\includegraphics[width=.310\linewidth]{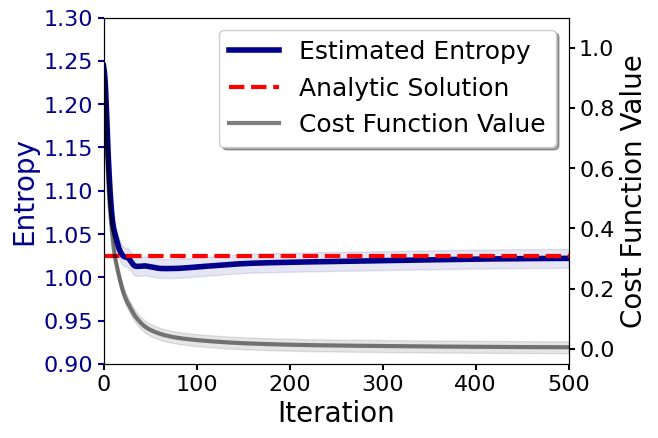}}}
\newcommand\sixtracetwo{\adjustbox{valign=m, vspace=0pt, margin=1ex}{\includegraphics[width=.310\linewidth]{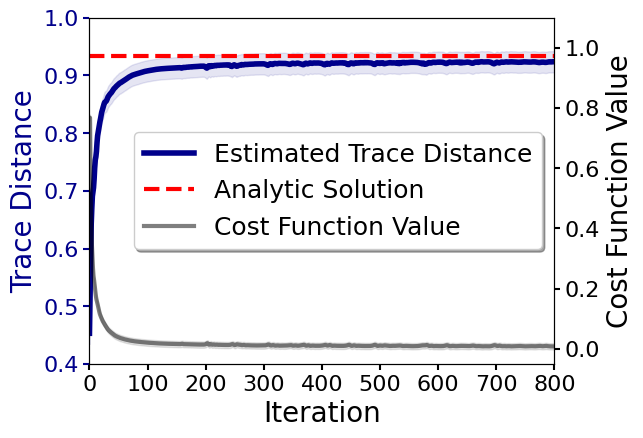}}}
\newcommand\sixfidelitytwo{\adjustbox{valign=m, vspace=0pt, margin=1ex}{\includegraphics[width=.310\linewidth]{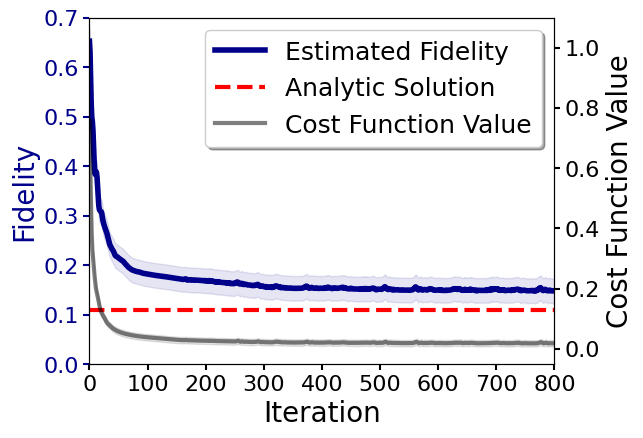}}}

\newcommand\fourvonone{\adjustbox{valign=m, vspace=0pt, margin=1ex}{\includegraphics[width=.310\linewidth]{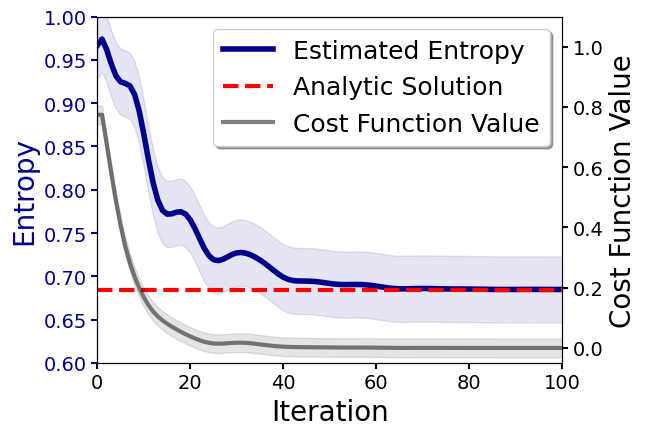}}}
\newcommand\fourrenyione{\adjustbox{valign=m, vspace=0pt, margin=1ex}{\includegraphics[width=.310\linewidth]{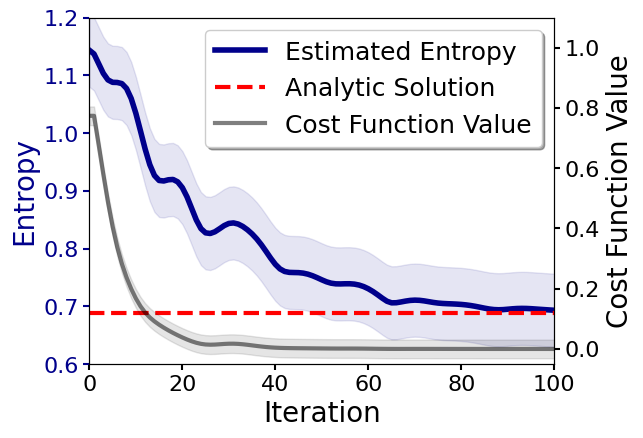}}}
\newcommand\fourtsallisone{\adjustbox{valign=m, vspace=0pt, margin=1ex}{\includegraphics[width=.310\linewidth]{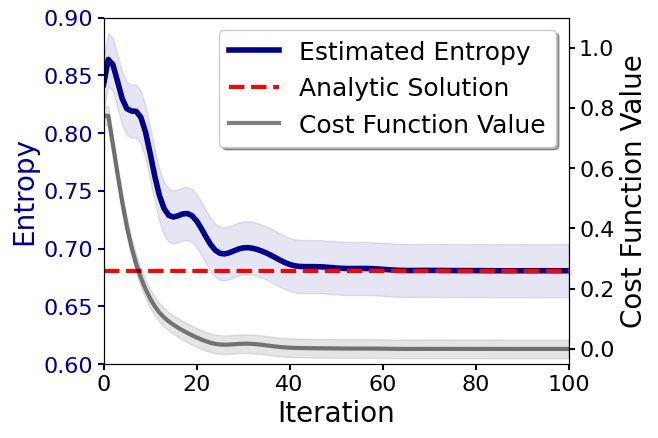}}}
\newcommand\fourtraceone{\adjustbox{valign=m, vspace=0pt, margin=1ex}{\includegraphics[width=.310\linewidth]{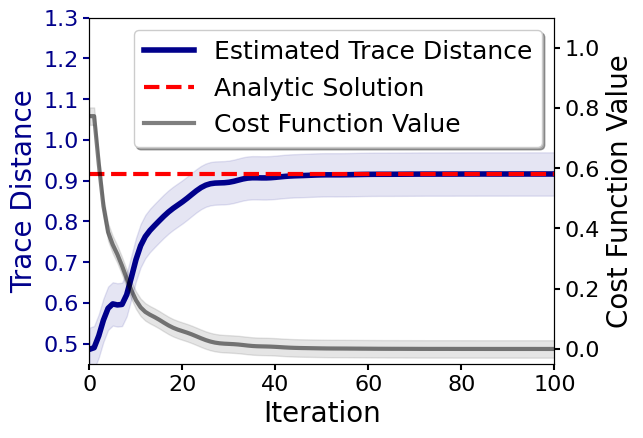}}}
\newcommand\fourfidelityone{\adjustbox{valign=m, vspace=0pt, margin=1ex}{\includegraphics[width=.310\linewidth]{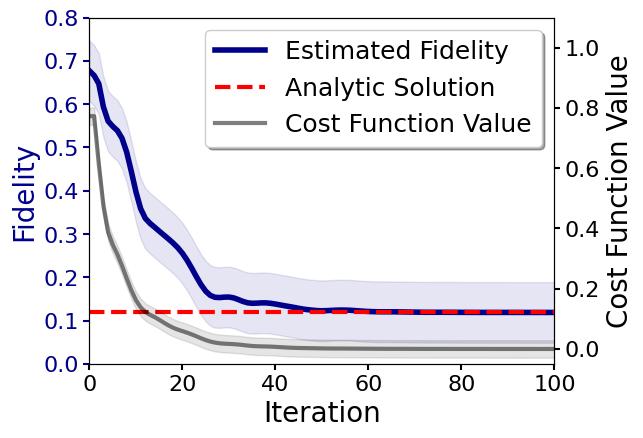}}}

\newcommand\fourvontwo{\adjustbox{valign=m, vspace=0pt, margin=1ex}{\includegraphics[width=.310\linewidth]{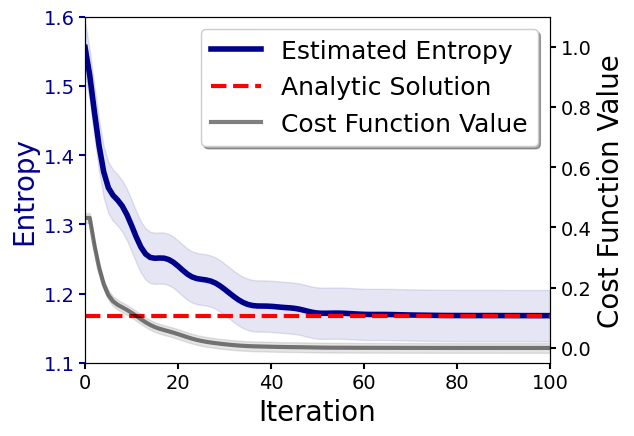}}}
\newcommand\fourrenyitwo{\adjustbox{valign=m, vspace=0pt, margin=1ex}{\includegraphics[width=.310\linewidth]{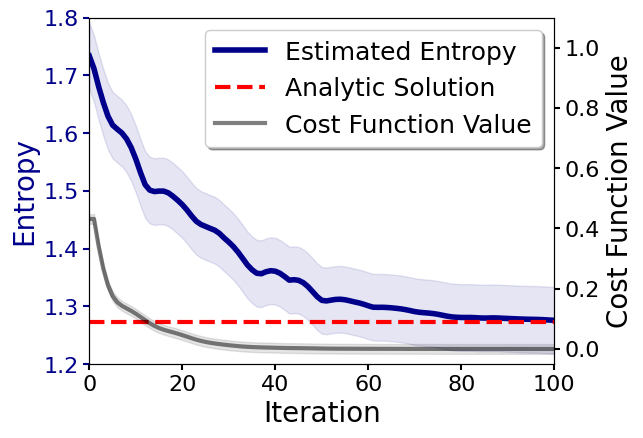}}}
\newcommand\fourtsallistwo{\adjustbox{valign=m, vspace=0pt, margin=1ex}{\includegraphics[width=.310\linewidth]{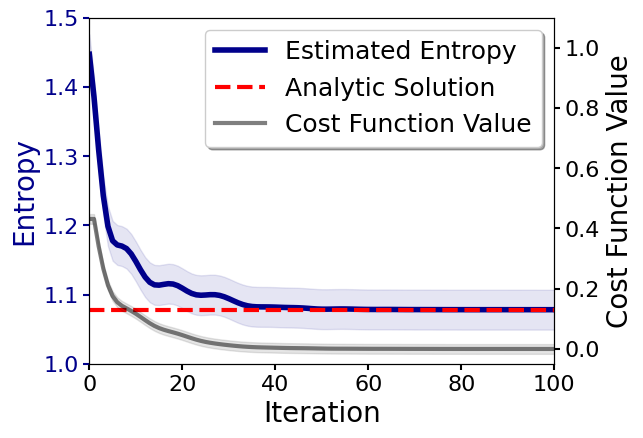}}}
\newcommand\fourtracetwo{\adjustbox{valign=m, vspace=0.pt, margin=1ex}{\includegraphics[width=.310\linewidth]{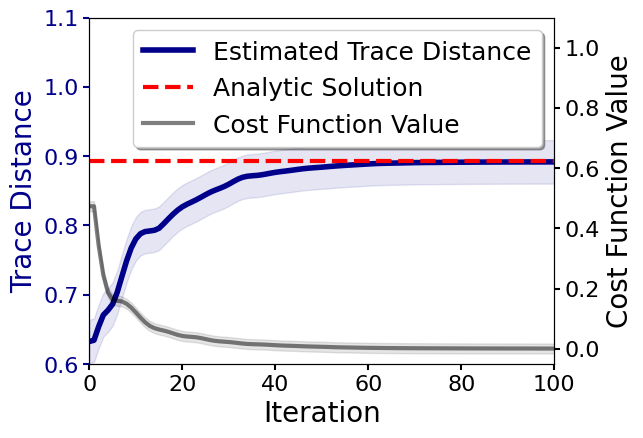}}}
\newcommand\fourfidelitytwo{\adjustbox{valign=m, vspace=0pt, margin=1ex}{\includegraphics[width=.310\linewidth]{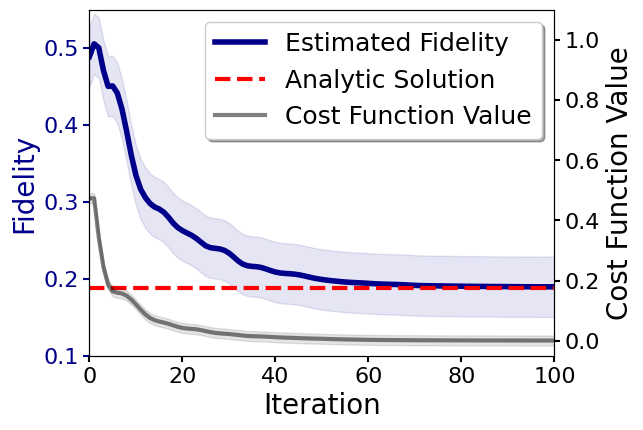}}}

\theoremstyle{plain}
\newtheorem{prop}{Proposition}
\newtheorem{proposition}{Proposition}
\newtheorem{theorem}{Theorem}
\newtheorem{lemma}{Lemma}
\newtheorem{corollary}{Corollary}[proposition]
\newtheorem{remark}{Remark}

\theoremstyle{definition}
\newtheorem{definition}{Definition}

\title{Disentangling quantum neural networks for unified estimation of \\ quantum entropies and distance measures}
\author{Myeongjin Shin}
% \email{hanwoolmj@kaist.ac.kr}
\thanks{These authors contributed equally to this work.}
\affiliation{School of Computing, KAIST, Daejeon 34141, Korea}

\author{Seungwoo Lee}
% \email{smilelee9@kaist.ac.kr}
\thanks{These authors contributed equally to this work.}
\affiliation{School of Computing, KAIST, Daejeon 34141, Korea}

\author{Junseo Lee}
% \email{harris.junseo@gmail.com}
\thanks{These authors contributed equally to this work.}
\affiliation{Quantum AI Team, Norma Inc., Seoul 04799, Korea}

\author{Mingyu Lee}
%\email{red1108@snu.ac.kr}
\affiliation{Department of Computer Science and Engineering, Seoul National University, Seoul 08826, Korea}

\author{Donghwa Ji}
%\email{donghwa722@snu.ac.kr}
\affiliation{College of Liberal Studies, Seoul National University, Seoul 08826, Korea}

\author{Hyeonjun Yeo}
%\email{duguswns11@snu.ac.kr}
\affiliation{Department of Physics and Astronomy, Seoul National University, Seoul 08826, Korea}

\author{Kabgyun Jeong}
\email{kgjeong6@snu.ac.kr}
\affiliation{Research Institute of Mathematics, Seoul National University, Seoul 08826, Korea}
\affiliation{School of Computational Sciences, Korea Institute for Advanced Study, Seoul 02455, Korea}

\date{\today}

\begin{abstract}
The estimation of quantum entropies and distance measures, such as von Neumann entropy, R\'{e}nyi entropy, Tsallis entropy, trace distance, and fidelity-induced distances such as the Bures distance, has been a key area of research in quantum information science. In our study, we introduce the disentangling quantum neural network (DEQNN), designed to efficiently estimate various physical quantities in quantum information. Estimation algorithms for these quantities are generally tied to the size of the Hilbert space of the quantum state to be estimated. Our proposed DEQNN offers a unified dimensionality reduction methodology that can significantly reduce the size of the Hilbert space while preserving the values of diverse physical quantities. We provide an in-depth discussion of the physical scenarios and limitations in which our algorithm is applicable, as well as the learnability of the proposed quantum neural network.
\end{abstract}

\maketitle

\tableofcontents

% ======================================================================

\section{Introduction} %%%%%% COMPLETE %%%%%%%
Quantum entropies and distance measures play a fundamental role in understanding the properties of quantum states. Essentially, quantum entropies quantify how mixed a quantum state is, while quantum distance measures represent the distinguishability between two quantum states. Estimating these quantities is crucial for understanding quantum states and systems.

In this paper, our objective is to estimate quantum entropies and distances within quantum information theory, encompassing von Neumann entropy~\cite{neumann1927thermodynamik, shannon1948mathematical}, quantum R\'{e}nyi entropy~\cite{renyi1961measures, umegaki1954conditional}, quantum Tsallis entropy~\cite{tsallis1988possible}, trace distance~\cite{helstrom1969quantum, helstrom1967detection}, and fidelity~\cite{jozsa1994fidelity, uhlmann1976transition}. The estimation of these quantities has garnered significant interest, resulting in numerous papers~\cite{acharya2019measuring, acharya2020estimating, subramanian2021quantum, wang2024new, gilyen2019distributional, wang2022quantum, wang2023fast, wang2023quantum, gilyen2019quantum, gilyen2022improved, liu2021solving, chen2021variational, shin2024estimating, goldfeld2024quantum, lee2024estimating}. For current approach, we need to estimate all the quantities individually. Consequently, different algorithmic implementations are necessary for each quantity.

Initial approaches~\cite{acharya2019measuring, acharya2020estimating} used quantum state tomography~\cite{cramer2010efficient} for the estimation of quantum entropies. However, quantum state tomography allows for the reconstruction of the quantum state, enabling the calculation of quantum entropies and distance measures. These approaches are inefficient due to high complexity. To overcome this issue, several algorithmic methods have been proposed, including the use of quantum query access models~\cite{subramanian2021quantum, wang2024new, gilyen2019distributional, wang2022quantum, wang2023fast}, Fourier transformation and Taylor series~\cite{wang2023quantum}, and block encodings~\cite{gilyen2019quantum, gilyen2022improved}.

Quantum query access models address the high (computational) complexity issue, requiring only polynomial rank query access. However, these methods necessitate the quantum circuit that generates the quantum state, which is impractical in many situations. Additionally, the effectiveness of the quantum query model construction remains an open question~\cite{wang2022quantum}. Without the need for quantum query access, the algorithm using Fourier transformation and Taylor series~\cite{wang2023quantum} only requires identical copies of the state. However, this algorithm is ineffective when the smallest non-zero eigenvalue of the quantum state is exponentially small, which is common in many cases. The block encoding algorithm is suggested for fidelity estimation~\cite{gilyen2022improved}, needing only identical copies of the quantum state and having complexity polynomial to the rank. Each algorithmic method only estimates individual quantities, necessitating different approaches for each quantum entropy and distance measure, leading to inefficiency.

Recently, heuristic methods using variational quantum circuits have been proposed~\cite{liu2021solving, chen2021variational, shin2024estimating, goldfeld2024quantum, lee2024estimating}. These methods utilize variational formulas, where the lower or upper bounds of each formula correspond to a quantity value. To optimize the variational formula and determine the lower or upper bounds, variational quantum circuits are employed. The complexity of these methods is not known, but numerical simulations show they are efficient in many cases. The advantage lies in the ease of implementing this algorithm in software, as only the variational formula needs to be altered to estimate other quantities. The ansatz and parameters in variational quantum circuits must be adjusted for each quantity, preventing the reuse of optimized parameters. Consequently, the number of training instances is proportional to the number of quantities to be estimated, leading to an increase in copy complexity. Additionally, methods employing variational quantum circuits face a critical issue known as barren plateaus~\cite{mcclean2018barren, marrero2021entanglement}. On barren plateaus, the quantum circuit is untrainable because the gradient exponentially vanishes with the size of the system. Variational quantum circuits that use global cost functions are vulnerable to barren plateaus~\cite{cerezo2021cost}. Recent methods that use variational quantum circuits~\cite{liu2021solving, chen2021variational, shin2024estimating, goldfeld2024quantum, lee2024estimating} suffer from barren plateaus due to global cost functions.

This paper introduces a variational method for the estimation of quantum entropies and distance measures. The main concept is to find a smaller quantum state that preserves the quantum entropies and distance measures using variational quantum circuits. This concept is referred to as a dimension reduction or data compression technique. Quantum autoencoders~\cite{romero2017quantum} were proposed as an example of a dimension reduction technique for efficient data handling. We propose that dimension reduction can be viewed as disentangling quantum states and prove several important characteristics of disentanglement.

We prove that disentangling quantum states preserve quantum entropies and distance measures in smaller quantum states. Thus, the variational quantum circuit is designed to disentangle the quantum states. Our method can be generalized to estimate the quantum entropies of an arbitrary number of states and distance measures of those pairs using only one variational quantum circuit. {For example, to calculate the von Neumann entropy, Rényi entropy, and Tsallis entropy for a single quantum state, there is no need for separate learning processes. By finding just one disentangling unitary operator $U$ corresponding to the quantum state, all values can be estimated at once, offering an advantage. Specifically, this approach is highly beneficial in common cases where we must repeatedly calculate the entropy to determine the appropriate parameter or which type of generalized entropy that accurately describes the given system \cite{baez2022renyi, tsallis2009nonadditive}. In addition, simultaneously estimating trace distance and fidelity is helpful because they represent different perspectives of the system respectively.} We support our results with numerical simulations and analyze strategies to avoid barren plateaus by using local cost functions.

This paper is structured as follows. Section~\ref{sec:background} explores the definitions and characteristics of quantum entropies and distance measures, with a detailed explanation of the continuity bound. Section~\ref{sec:Disentanglement} discusses methods for disentangling quantum states in the presence of noise and proves that quantum entropies and distance measures are preserved during the disentanglement process. Section~\ref{sec:Main} defines the \textit{disentangling quantum neural networks} (DEQNN) model and explains how quantum information quantities are estimated through the associated cost function. Numerical simulation results are also presented in Section~\ref{sec:simulation}. Finally, Section~\ref{sec:discussion} provides a summary of the overall results and discusses future research directions. Several detailed proofs and mathematical specifics are included in the appendixes.

% ======================================================================

\section{Quantum Entropies and Distance Measures}\label{sec:background}%%%%%% COMPLETE %%%%%%%

% ======================================================================

The primary purpose of reducing the Hilbert space is to facilitate easier extraction of information from the system. While this reduction may also preserve other important quantities, our primary focus will be on two key quantities: quantum entropies and distance measures. In this section, we briefly introduce these two concepts and explain how our method addresses them.

\subsection{Definitions and Properties}%%%%%% COMPLETE %%%%%%%
Entropy serves as a metric for gauging randomness or uncertainty within a system. Quantum entropies represent an extension of this concept within the realm of quantum mechanics, quantifying the informational content inherent in a density matrix. One of the most standard quantum entropies is the von Neumann entropy $S(\rho)=-\T(\rho\log_2\rho)$, which serves as the quantum analog of classical Gibbs entropy.

Under short-range interactions, the von Neumann entropy is extensive and effectively reconstructs classical statistical mechanics within a quantum mechanical framework, similarly to Gibbs entropy \cite{tsallis2022entropy}. This approach allows for the clear definition of important properties, such as the thermodynamic limit, temperature, and non-negative specific heat \cite{bogolubov2010introduction}. Additionally, numerous applications of the von Neumann entropy have emerged in quantum information theory, including its use as an entanglement measure \cite{nielsen2001quantum}, for determining channel capacity \cite{holevo1973bounds}, and in quantum phase detection \cite{kitaev2006topological}.

%\subsection{Generalized quantum entropies}
In addition to the von Neumann entropy $S(\rho)$, there are mathematical generalization, such as the quantum R\'{e}nyi entropy 
\begin{align}
    S_\alpha(\rho)=\frac{1}{1-\alpha}\log\T(\rho^\alpha),~\alpha\in(0,1)\cup(1,\infty),
\end{align}
and the quantum Tsallis entropy
\begin{align}
    S_q(\rho)=\frac{1}{1-q}\left(\T(\rho^q)-1\right),~q\in(0,1)\cup(1,\infty).
\end{align}
These quantum entropies are analogs of the classical R\'{e}nyi and the Tsallis entropies. As the parameters $\alpha$ and $q$ approach 1, both $S_{\alpha}(\rho)$ and $S_q(\rho)$ converge to the von Neumann entropy $S(\rho)$, as expressed by
\begin{equation}
\lim\limits_{\alpha \to 1} S_{\alpha}(\rho) = S(\rho),~\textnormal{and}~\lim\limits_{q \to 1} S_q(\rho) = S(\rho).
\end{equation}
This insight reveals an interconnectedness among quantum entropies, facilitating a unified approach for estimation across different measures. 

From a physical perspective, Tsallis entropy was proposed as a generalization of Gibbs entropy for non-extensive physics \cite{tsallis2009nonadditive, tsallis2022entropy}. The non-extensive statistical mechanics framework based on Tsallis entropy has been applied across various fields, particularly in studies of long-range interacting systems \cite{sheykhi2018modified, ourabah2022generalized} and entangled systems \cite{caruso2008nonadditive}. In contrast, Rényi entropy originated from information theory \cite{renyi1961measures} as a means to generalize entropies mathematically. The Rényi entropy also has significant physical interpretations, serving as a free energy \cite{baez2022renyi} and as a measure of entanglement \cite{wang2016entanglement, brydges2019probing}.

However, there are some theoretical issues with these generalized entropies. The most significant is that both Rényi and Tsallis entropies lack Lesche stability \cite{abe2002stability, lutsko2009tsallis}, meaning they can be discontinuous in response to small changes in the distribution. This instability prevents these entropies from being physically well-defined. Additional issues remain, such as those related to the maximum entropy principle \cite{jizba2019maximum}, the lack of a clear thermodynamic limit when non-extensivity is present, interpreting Rényi entropy as a free energy as $\alpha\rightarrow 1$, and the occurrence of negative specific heat in the context of Tsallis entropy \cite{boon2011nonextensive}.

We clarify that our algorithm does not aim to unify generalized statistical mechanics or resolve theoretical issues within generalized entropies. Bearing these limitations in mind, we cautiously propose that our Hilbert space reduction strategy is still valuable and can be applied to estimate quantum Rényi and Tsallis entropies, given numerous studies across diverse fields that utilize these generalized entropies \cite{laflorencie2016quantum, brydges2019probing, wen2019measuring}. Notably, while the lack of Lesche stability cannot be entirely circumvented, we have verified that for the random distribution inputs used in our numerical simulations, the estimated values of generalized entropies remained stable under small perturbations.

% \subsection{}
Distance measures serve as a fundamental instrument in assessing the distinguishability between two quantum states, with potential applicability extending to quantum channels. Key metrics within this domain include the trace distance $T(\rho, \sigma)$ and the Hilbert-Schmidt distance $D_{HS}(\rho, \sigma)$. These distance measures, denoted by $D$, are characterized by the following properties:
\begin{itemize}
    \item Positivity: $D(\rho, \sigma) \geq 0$
    \item Symmetry: $D(\rho, \sigma) = D(\sigma, \rho)$
    \item Triangle inequality: $D(\rho, \gamma) + D(\gamma, \sigma) \geq D(\rho, \sigma)$.
\end{itemize} 
While the Hilbert-Schmidt distance can indeed be efficiently computed using the swap test on a quantum computer~\cite{garcia2013swap}, the focus of our paper centers on the trace distance, given its practical relevance and applicability.

The fidelity $F(\rho, \sigma)$ serves as a widely employed measure for assessing the distance between density operators, reflecting the degree of similarity between quantum states. It is important to note that fidelity, while not meeting the criteria of a metric due to its failure to satisfy the triangle inequality, can still be utilized to establish metrics within the space of density matrices. Examples include Bures angle $D_A(\rho, \sigma) = \arccos F(\rho, \sigma)$ and Bures distance $D_B(\rho, \sigma) = \sqrt{2-2\sqrt{F(\rho, \sigma)}}$. Consequently, estimating fidelity facilitates the estimation of these metrics. All aforementioned measures, including trace distance, fidelity-derived distances such as Bures distance and angle, characterize the distance between quantum states, thus demonstrating {connection}. This interconnectedness opens avenues for a unified approach to their estimation.

\begin{table}
\caption{\textbf{Best known sample complexity for estimating quantum information quantities.} This table presents key quantum information quantities—von Neumann entropy $S(\rho)$, R\'{e}nyi entropy $S_\alpha(\rho)$ (where $\alpha$ is a positive, non-integer value), Tsallis entropy $S_q(\rho)$ (omitted in this table), trace distance $T(\rho, \sigma)$, and fidelity $F(\rho, \sigma)$—and their best-known sample complexities for estimation. Here, $N=2^n$ denotes the dimension of the $n$-qubit quantum state, and $\epsilon$ is the desired additive error. A more detailed summary of the contents in this table is well explained in the paper by Wang et al~\cite{wang2024new}. \\ $^\dagger$Note that for measures like trace distance and fidelity, the sample complexity for low-rank quantum states may be expressed in a rank-dependent manner, specifically $\mathcal{O}(r/\epsilon^2)$ and $\mathcal{O}(r/\epsilon)$, where $r$ is an upper bound for the rank of the quantum states.}
\begin{ruledtabular}
\renewcommand{\arraystretch}{1.2}
\begin{tabular}{c | c}
Quantities & (Best Known) Sample Complexity \\
\hline
$S(\rho)=-\T(\rho\log\rho)$ &  $\mathcal{O}(N^2/\epsilon^2)$ \\
\hline
$S_\alpha(\rho)=\frac{1}{1-\alpha}\log\T(\rho^\alpha)$ & $\mathcal{O}\left((N/\epsilon)^{\max\{2/\alpha, 2\}}\right)$ \\ 
\hline
$T(\rho, \sigma) = \frac{1}{2}\T\left(|\rho-\sigma|\right)$ & $\mathcal{O}(N/\epsilon^2)^\dagger$ \\
\hline
$F(\rho,\sigma) = \left(\T\sqrt{\sqrt{\rho}\sigma\sqrt{\rho}}\right)^2$ & $\mathcal{O}(N/\epsilon)^\dagger$
\end{tabular}
\end{ruledtabular}
\label{table:est_comp}
\end{table}

There are two main approaches for quantum algorithms in estimating quantum physical quantities. The first is the \textit{quantum sample access} model, where copies of the quantum state are provided and can be used directly. The other is the \textit{purified quantum query access} model, in which a mixed quantum state can be accessed in oracle form to prepare its purification. In this paper, we consider the quantum sample access model, where the number of copies of the quantum state required for the algorithm to operate within a desired error is referred to as the \textit{sample complexity.}

For von Neumann entropy and Rényi entropy, algorithms with sample complexities of $\mathcal{O}(N^2/\epsilon^2)$ and $\mathcal{O}\left((N/\epsilon)^{\max\{2/\alpha, 2\}}\right)$, respectively, have been proposed~\cite{acharya2020estimating}, though these algorithms are based on weak Schur sampling, making it non-trivial to reduce the sample complexity in a rank-dependent manner. Recently, a Tsallis entropy $S_q(\rho)$ estimation algorithm for $n$-qubit quantum states with rank-dependent sample complexity was proposed~\cite{liu2024estimating}. It was shown that if $q \geq 1 + \Omega(1)$, the lower bound of the sample complexity is $\tilde{O}(1/\epsilon^{3+\frac{2}{q-1}})$, while if $1 \leq q \leq 1 + \frac{1}{n - 1}$, the lower bound of the sample complexity is $\tilde{O}(r^2/\epsilon^5)$. Additionally, given copies of a quantum state $\rho$, the sample complexity of quantum algorithms for calculating the trace distance and fidelity between two quantum states through quantum state certification~\cite{buadescu2019quantum}—the task of determining if the quantum state is identical to a known quantum state $\sigma$ or $\epsilon$-far—has been shown to be $\mathcal{O}(N/\epsilon^2)$ and $\mathcal{O}(N/\epsilon)$, respectively. In this case, it has also been proven that the sample complexity can be improved in a rank-dependent manner for low-rank quantum states. These sample complexities are summarized in Table~\ref{table:est_comp}. This demonstrates that typical quantum physical quantity estimation algorithms have sample complexities proportional to the Hilbert-space dimension. Our DEQNN has significance as it is the first to propose a {variational quantum algorithm~\cite{cerezo2021variational}} approach that reduces the Hilbert space while preserving these physical quantities.

% ======================================================================
\begin{table*}%%%%%% COMPLETE %%%%%%%
\caption{\textbf{Summary of definitions and continuity bounds for quantum entropies and distance measures.} This table presents continuity bounds for various quantum information quantities, where $T_{\rho} = T(\rho_1, \rho_2)$, $T_{\sigma} = T(\sigma_1, \sigma_2)$, $r$ satisfies $r \geq \max\{\text{rank}(\rho), \text{rank}(\sigma)\}$, and $m$ satisfies $m \rho_1 \leq \sigma_1$, $m \rho_2 \leq \sigma_2$.}
\begin{ruledtabular}
\renewcommand{\arraystretch}{1.2}
\begin{tabular}{c | c | c}
Quantities  & Continuity Bounds & Ref.  \\
\hline
$S(\rho)=-\T(\rho\log\rho)$ & $|S(\rho_1)-S(\rho_2)| \leq T_{\rho}\log\left(r-1\right) -T_{\rho}\log T_{\rho} - (1-T_{\rho})\log\left(1-T_{\rho}\right)$ & \cite{audenaert2007sharp} \\
\hline
$S_\alpha(\rho)=\frac{1}{1-\alpha}\log\T(\rho^\alpha),~\alpha\in(0,1)\cup(1,\infty)$  & 
$|S_{\alpha}(\rho_1)-S_{\alpha}(\rho_2)| \leq$
$\begin{cases}
    \frac{1}{1-\alpha}((1-T_{\rho})^{\alpha}-1+(r-1)^{1-\alpha}T_{\rho}^\alpha), & \alpha\in(0,1)\\
    \frac{2\alpha}{\alpha-1}r^{\alpha-1}T_{\rho}, & 
    \alpha\in(1,\infty) \\
\end{cases}$ & \cite{hanson2017tight, chen2017sharp}\\ 
\hline
$S_q(\rho)=\frac{1}{1-q}(\T(\rho^q)-1),~q\in(0,1)\cup(1,\infty)$ &  $|S_q(\rho_1)-S_q(\rho_2)| \leq$  
$\begin{cases}
    \frac{1}{1-q}((1-T_{\rho})^q-1+(r-1)^{1-q}T_{\rho}^q), & q\in(0,1)\\
    \frac{2q}{q-1}T_{\rho}, & q\in(1,\infty) \\
\end{cases}$ & \cite{raggio1995properties, chen2017sharp}\\
\hline
$T(\rho, \sigma) = \frac{1}{2}\T\left(|\rho-\sigma|\right)$ & $|T(\rho_1, \sigma_1)-T(\rho_2, \sigma_2)| \leq T_{\rho} + T_{\sigma}$ & \cite{nielsen2001quantum}\\
\hline
$F(\rho,\sigma) = \left(\T\sqrt{\sqrt{\rho}\sigma\sqrt{\rho}}\right)^2$ & $|F(\rho_1, \sigma_1)-F(\rho_2, \sigma_2)| \leq \arccos (1-T_{\rho})^2 + \arccos (1-T_{\sigma})^2$ & \cite{nielsen2001quantum}
\end{tabular}
\end{ruledtabular}
\label{table:cont_bound}
\end{table*}

\subsection{Continuity Bounds}%%%%%% COMPLETE %%%%%%%

Continuity bounds demonstrate that when quantum states are in proximity, the associated quantities also exhibit proximity. Typically, continuity bounds for quantum entropies and distance measures are articulated in relation to the trace distance. The detailed representations of continuity bounds, expressed in terms of trace distance, are presented in Table~\ref{table:cont_bound}.

The continuity bounds for von Neumann, R\'{e}nyi, and Tsallis entropy are well-documented. In this work, we refine the bounds for R\'{e}nyi and Tsallis entropy, specifically in terms of the rank $r = \text{rank}(\rho)$. Here, the rank $r$ of a density matrix $\rho$ denotes the number of non-zero eigenvalues in $\rho$.

In contrast, fidelity does not directly adhere to the triangle inequality. However, it is established that fidelity satisfies the relationship~\cite{nielsen2001quantum}:
\begin{equation}\label{eq:2}
\arccos F(\rho, \sigma) + \arccos F(\sigma, \gamma) \geq \arccos F(\rho, \gamma).
\end{equation}
Further details on the proofs of continuity bounds is provided in \textbf{Appendix~\ref{sec:proofcontbound}}.

We define $F_1 = \{S, S_{\alpha}, S_q\}$ and $F_2 = \{T, F\}$ for convenience. This notation will be used in later sections. Then, the left side of the continuity bound with respect to trace distance can be expressed using $f_1 \in F_1$ and $f_2 \in F_2$. The quantities in $F_1$ represent the information of one quantum state, including von Neumann entropy, R\'{e}nyi entropy, and Tsallis entropy. The quantities in $F_2$ represent the information between two quantum states, including trace distance and fidelity. If $T(\rho_1, \rho_2),~T(\sigma_1, \sigma_2) < \epsilon$, by using the continuity bounds of $F_1$ we deduce that
\begin{equation}
|f_1(\rho_1) - f_1(\rho_2)| < Cr^a\epsilon^b,
\end{equation}
and, by using the continuity bounds of $F_2$, we address that
\begin{equation}
|f_2(\rho_1, \sigma_1) - f_2(\rho_2, \sigma_2)| < Cr^a\epsilon^b,
\end{equation}
for some constants $C, a, b$, which are determined by $f_1$ and $f_2$. Here, $r = \max\{\text{rank}(\rho_1), \text{rank}(\rho_2)\}$. Further details are provided in \textbf{Appendix~\ref{sec:prooflem1}}.

{Note that unlike the von Neumann entropy $S(\rho)$ and the Tsallis entropy $S_q(\rho)$, the R\'{e}nyi entropy $S_\alpha(\rho)$ is not Lesche-stable, meaning it can not be reliably measured in thermodynamic limit \cite{abe2002stability}. However, as observed from the continuity bound above, $S_\alpha(\rho)$ can be reliably measured in finite systems, which is the case we are addressing in this paper.}

% ======================================================================

\section{Disentanglement with Error}\label{sec:Disentanglement}%%%%%% COMPLETE %%%%%%%
In general, applying a unitary transformation can transform an unknown quantum state into a tensor product of quantum states, a process known as \textit{disentanglement}. Our goal is to find a unitary operator $U$ such that $U\ket{\psi}_{AB} = \ket{0}_A \otimes \ket{\psi}_B^*$ for a pure state $\ket{\psi}_{AB}$ and $U\rho_{AB} U^{\dagger} = \ket{0}\bra{0}_A \otimes \rho^*_B$ for a mixed state $\rho_{AB}$. Disentanglement can be used for quantum state tomography~\cite{huang2024learning, yao2023quantum}. Note that for $r=\text{rank}(\rho_{AB})$, by choosing the dimension of system $B$ greater or equal to $r$, there always exists a unitary $U$ such that $U\rho_{AB} U^{\dagger} = \ket{0}\bra{0}_A \otimes \rho^*_B$ for some $\rho^*_B$. (The asterisk in the superscript of the quantum states indicates notation to distinguish between the partial trace of the composite system yielding the subsystem and the subsystem obtained after applying the unitary operator and taking the partial trace.)

However, exact unitaries are often elusive, so we define the disentanglement process with an error tolerance $\epsilon$ and explore the characteristics of these unitary transformations.

\begin{definition}\label{def1}
For an $n$-qubit mixed state $\rho_{AB}$, a unitary transformation $U$ that satisfies
\begin{equation}
\text{Tr}\left(U\rho_{AB} U^{\dagger} \ket{0}\bra{0}_A \otimes I_B\right) \geq 1-\epsilon, 
\end{equation}
is termed a {type-\RN{1} $\epsilon$-approximate disentangling unitary} for $\rho_{AB}$. If $\epsilon = 0$, then $U$ is called a type-\RN{1} perfect disentangling unitary for $\rho_{AB}$. The conditions for the existence of such a perfect disentangling unitary are presented in \textbf{Proposition~\ref{prop1}}.
\end{definition}

Here, $\epsilon$ represents the disentanglement error for $\rho$ using $U$. Such a unitary $U$ can be constructed through the utilization of a parametrized quantum circuit~\cite{benedetti2019parameterized}. Figure~\ref{fig:dis} illustrates this process schematically.

\begin{figure}[h]
    \includegraphics[width=8.5cm]{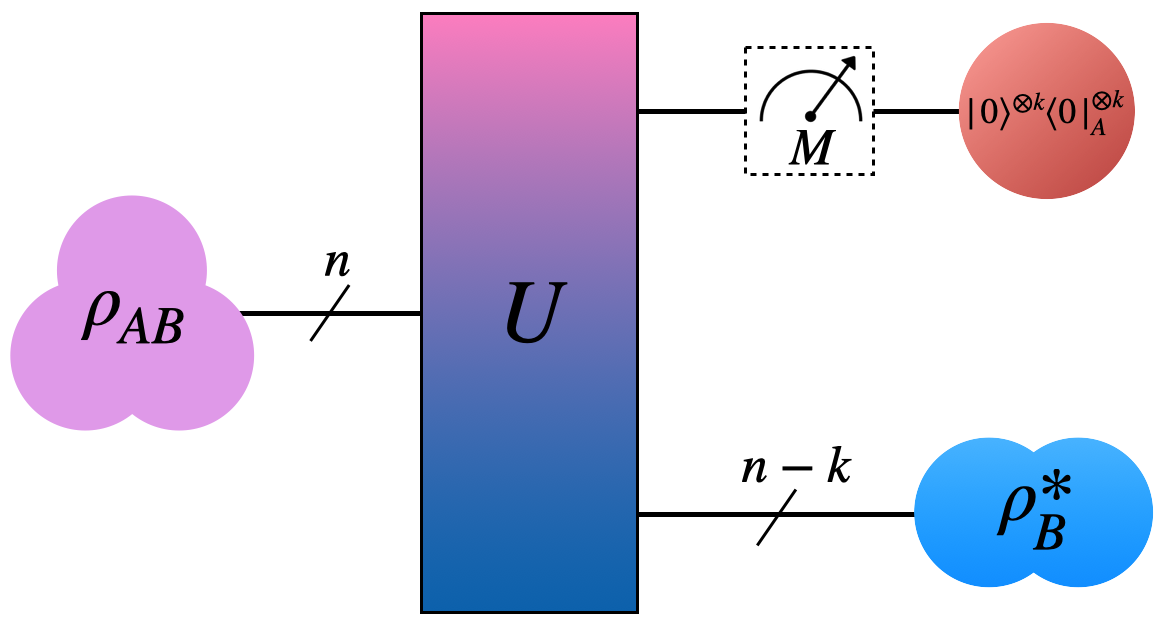}
    \caption{\textbf{Type-\RN{1} $\epsilon$-approximate disentangling unitary $U$ for a quantum state $\rho_{AB}$.} System $A$ becomes a pure state, and the remaining system $B$ is learned to have similar characteristics to the original mixed system $AB$.}
    \label{fig:dis}
    \centering
\end{figure}

And the following is an extended definition of disentanglement between two quantum states, incorporating distance and fidelity, based on \textbf{Definition~\ref{def1}}. And Figure~\ref{fig:dis2} illustrates this process schematically.
\begin{definition}\label{def2}
For $n$-qubit mixed state $\rho_{AB}$ and $\sigma_{AB}$, a \textit{single} unitary transformation $U$ that satisfies \textit{both} of the following conditions:
\begin{align}
\text{Tr}\left(U\rho_{AB} U^{\dagger} \ket{0}\bra{0}_A \otimes I_B\right) \geq 1-\epsilon,~\text{and}\\
\text{Tr}\left(U\sigma_{AB} U^{\dagger} \ket{0}\bra{0}_A \otimes I_B\right) \geq 1-\epsilon,
\end{align}
is termed a type-\RN{2} $\epsilon$-approximate disentangling unitary for $\rho_{AB}$ and $\sigma_{AB}$. If $\epsilon = 0$, then $U$ is called a type-\RN{2} perfect disentangling unitary for $\rho_{AB}$ and $\sigma_{AB}$. The conditions for the existence of such a perfect disentangling unitary are presented in \textbf{Proposition~\ref{prop2}.}
\end{definition}

\begin{figure}[h]
    \includegraphics[width=8.5cm]{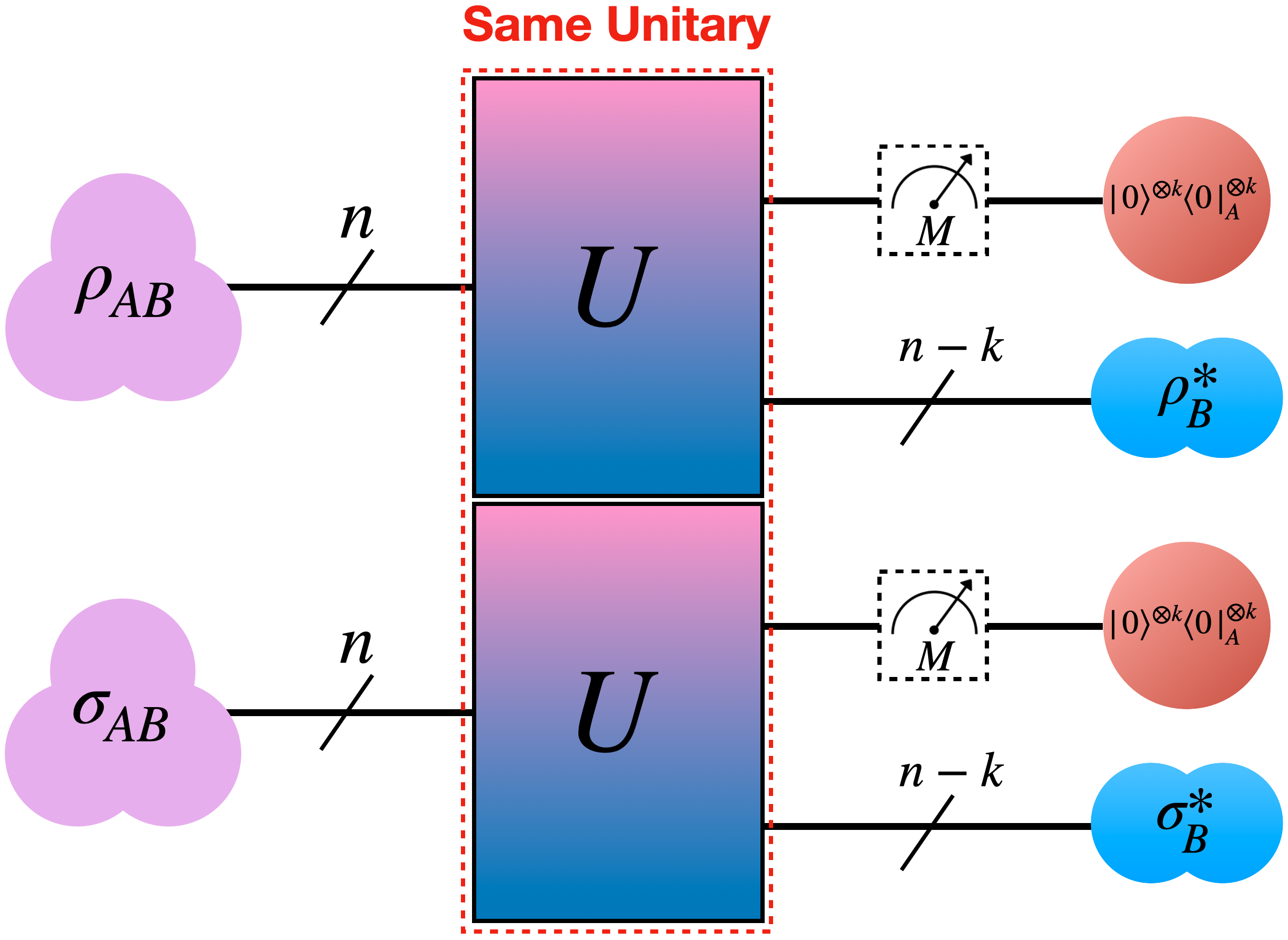}
    \caption{\textbf{Type-\RN{2} $\epsilon$-approximate disentangling unitary $U$ for two quantum states, $\rho_{AB}$ and $\sigma_{AB}$.} Note that the same unitary operation $U$ is applied to both quantum states.}
    \label{fig:dis2}
    \centering
\end{figure}

When the disentanglement error $\epsilon$ is small, it follows that the discrepancy between $\rho_{AB}^* = U\rho_{AB} U^{\dagger}$ and $\ket{0}\bra{0}_A\otimes \rho_B^*$ is also small, where $\rho_B^* = \T_A(U\rho_{AB} U^{\dagger})$. Specifically, the trace distance between $U\rho_{AB} U^{\dagger}$ and $\ket{0}\bra{0}_A\otimes \rho_B^*$ is minimal, given by \textbf{Lemma~\ref{lemma:td}}.
\begin{lemma}\label{lemma:td}
Consider a type-I $\epsilon$-approximate disentangling unitary $U$ for $\rho_{AB}$. The following property holds:
\begin{equation}
T(U\rho_{AB} U^{\dagger}, \ket{0}\bra{0}_A\otimes\rho_B^*) \leq 2\sqrt{r\epsilon},
\end{equation}
where $\rho_B^* = \T_A(U\rho_{AB} U^{\dagger})$ and $r=\textnormal{rank}(\rho_{AB})$.
\end{lemma}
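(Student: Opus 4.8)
The plan is to reduce the statement to the \emph{gentle measurement} (gentle operator) lemma. Write $\sigma := U\rho_{AB}U^{\dagger}$ and let $P := \ket{0}\bra{0}_A \otimes I$ be the projector singled out in \textbf{Definition~\ref{def1}}, so the hypothesis reads $\T(\sigma P) \ge 1-\epsilon$, equivalently $\T(\sigma(I-P)) \le \epsilon$. The natural intermediate object is the post-selected operator $P\sigma P$, which factorizes as $P\sigma P = \ket{0}\bra{0}_A \otimes \tilde{\rho}_B$ with $\tilde{\rho}_B := \bra{0}_A\,\sigma\,\ket{0}_A$. I would then bound $T(\sigma, \ket{0}\bra{0}_A\otimes\rho_B^*)$ by the triangle inequality through this intermediate point, splitting it into the half-trace-norm distance from $\sigma$ to $P\sigma P$ and that from $P\sigma P$ to the target $\ket{0}\bra{0}_A\otimes\rho_B^*$.

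For the first piece, the gentle measurement lemma applied to the projector $P$ gives directly $\norm{\sigma - P\sigma P}_1 \le 2\sqrt{\epsilon}$, hence $\tfrac12\norm{\sigma - P\sigma P}_1 \le \sqrt{\epsilon}$. Since $P$ is a projector its square root is itself, so the only thing to keep track of is that $P\sigma P$ is subnormalized (trace $\ge 1-\epsilon$), which the trace norm accommodates without issue.

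For the second piece I would control the discrepancy between the full partial trace $\rho_B^* = \T_A(\sigma)$ and the block $\tilde{\rho}_B$. Expanding the partial trace in an orthonormal $A$-basis containing $\ket{0}_A$ gives $\rho_B^* = \tilde{\rho}_B + \sum_{i\neq 0}\bra{i}_A\,\sigma\,\ket{i}_A$, and every summand is positive semidefinite because $\sigma \ge 0$; therefore $\rho_B^* - \tilde{\rho}_B \ge 0$ with $\T(\rho_B^* - \tilde{\rho}_B) = \T(\sigma(I-P)) \le \epsilon$. Using $\norm{\ket{0}\bra{0}_A\otimes X}_1 = \norm{X}_1$, this yields $\norm{P\sigma P - \ket{0}\bra{0}_A\otimes\rho_B^*}_1 = \norm{\rho_B^* - \tilde{\rho}_B}_1 = \T(\rho_B^*-\tilde{\rho}_B) \le \epsilon$, where the positivity is exactly what lets me replace a trace norm by a bare trace. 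Combining the two pieces gives $T(\sigma, \ket{0}\bra{0}_A\otimes\rho_B^*) \le \sqrt{\epsilon} + \tfrac12\epsilon$, which is at most $2\sqrt{r\epsilon}$ for every $r \ge 1$ and $\epsilon \le 1$, establishing the claim.

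Rather than a hard obstacle, the two delicate points are: (i) invoking the gentle measurement lemma with its correct constant and recognizing that it tolerates the subnormalization of $P\sigma P$; and (ii) the positivity argument $\rho_B^*-\tilde{\rho}_B \ge 0$, which is the crux that converts the second trace norm into a trace and thus into the raw disentanglement error. I note that this route in fact proves the rank-\emph{independent} estimate $\sqrt{\epsilon}+\tfrac12\epsilon$; the stated $\sqrt{r}$ factor is therefore slack, but it is harmless, since the downstream continuity bounds are already of the form $Cr^{a}\epsilon^{b}$, so the looser $2\sqrt{r\epsilon}$ is more than sufficient for everything that follows.
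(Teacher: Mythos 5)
Your proof is correct, but it takes a genuinely different route from the one in the paper. The paper's argument never touches the gentle measurement lemma: it first bounds the Hilbert--Schmidt distance $D_{HS}(U\rho_{AB}U^{\dagger},\,\ket{0}\bra{0}_A\otimes\rho_B^*)$ by $4\epsilon$, using the subadditivity of the Tsallis-2 entropy (i.e.\ $\T({\rho_{AB}^*}^2)-\T({\rho_B^*}^2)\le 1-\T({\rho_A^*}^2)$) together with a purity lower bound $\T({\rho_A^*}^2)\ge(1-\epsilon)^2$, and only then converts to trace distance via $T(\rho,\sigma)^2\le \mathrm{rank}(\rho)\,D_{HS}(\rho,\sigma)$ --- it is this last conversion that introduces the $\sqrt{r}$ factor. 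Your decomposition through the pinched operator $P\sigma P$, with the gentle operator lemma handling the first leg and the positivity of $\rho_B^*-\bra{0}_A\sigma\ket{0}_A$ (so that its trace norm equals its trace, at most $\epsilon$) handling the second, avoids the Hilbert--Schmidt detour entirely and yields the rank-independent bound $\sqrt{\epsilon}+\tfrac12\epsilon$, which is strictly stronger than $2\sqrt{r\epsilon}$; propagated through the continuity bounds it would even shave powers of $r$ off the inequalities in Table~\ref{table:ineq}. What the paper's route buys in exchange is self-containedness: it uses only the swap-test-adjacent purity identities already in play elsewhere in the paper, rather than importing the gentle measurement lemma as an external tool. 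Both arguments are sound; yours is tighter, the paper's is more elementary in its ingredients.
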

\begin{proof}
See the details of the proof in \textbf{Appendix~\ref{sec:prooflem1}}. 
\end{proof}

The above lemma shows that by minimizing $\epsilon$, the information loss incurred by partial trace can also be minimized. With this perspective, we can infer that disentanglement with error preserves quantum entropies and distance measures. This actually serves as a key component of this paper, which is the main theorem stated below.

\begin{table*}%[b]
\caption{\textbf{Expressions of \textbf{Theorems~\ref{thm:1}} and \textbf{\ref{thm:2}} for specific quantum entropies and distance measures.} This table presents the constants $C, a$, and $b$ for specific quantum entropies and distance measures that satisfy inequalities~(\ref{eq:thm1}) and~(\ref{eq:thm2}) as proposed in \textbf{Theorems~\ref{thm:1}} and \textbf{\ref{thm:2}}. These expressions confirm that the values of entropies and distance measures are preserved even when disentanglement is applied.}
\begin{ruledtabular}
\renewcommand{\arraystretch}{1.2}
\begin{tabular}{@{\hspace{3em}} c @{\hspace{3em}} | @{\hspace{3em}} c @{\hspace{3em}}}
Quantities & Inequalities  \\
\hline
$S(\rho)=-\T(\rho\log\rho)$ & $|S(\rho_{AB})-S(\rho_B^*)| < 2r^{\frac{3}{4}}\epsilon^{\frac{1}{4}}$ \\
\hline
$S_\alpha(\rho)=\frac{1}{1-\alpha}\log\T(\rho^\alpha),~\alpha\in(0,1)\cup(1,\infty)$  & 
$|S_{\alpha}(\rho_{AB})-S_{\alpha}(\rho_B^*)| <
\begin{cases}
    \frac{2^{\alpha}}{1-\alpha}r^{1-\frac{\alpha}{2}} \epsilon^{\frac{\alpha}{2}}, & \alpha\in(0,1)\\
    \frac{4\alpha}{\alpha-1}r^{\alpha-\frac{1}{2}}\epsilon^{\frac{1}{2}}, & 
    \alpha\in(1,\infty)
\end{cases}$\\ 
\hline
$S_q(\rho)=\frac{1}{1-q}(\T(\rho^q)-1),~q\in(0,1)\cup(1,\infty)$ & $|S_q(\rho_{AB})-S_q(\rho_B^*)| <
\begin{cases}
    \frac{2^q}{1-q}r^{1-\frac{q}{2}} \epsilon^\frac{q}{2}, & q\in(0,1)\\
    \frac{4q}{q-1}r^\frac{1}{2}\epsilon^\frac{1}{2}, & 
    q\in(1,\infty)
\end{cases}$\\
\hline
$T(\rho, \sigma) = \frac{1}{2}\T\left(|\rho-\sigma|\right)$ & $|T(\rho_{AB}, \sigma_{AB})-T(\rho_B^*, \sigma_B^*)| < 4r^\frac{1}{2}\epsilon^\frac{1}{2}$\\
\hline
$F(\rho,\sigma) = \left(\T\sqrt{\sqrt{\rho}\sigma\sqrt{\rho}}\right)^2$ & $|F(\rho_{AB}, \sigma_{AB})-F(\rho_B^*, \sigma_B^*)| < 2\pi r^\frac{1}{4}\epsilon^{\frac{1}{4}}$
\end{tabular}
\end{ruledtabular}
\label{table:ineq}
\end{table*}

\subsection{Disentanglement Preserves Quantum Entropies}%%%%%% COMPLETE %%%%%%%
\begin{theorem}\label{thm:1}
For $n$-qubit state $\rho_{AB}$ and $f_1 \in F_1$, the following property holds:
\begin{equation}\label{eq:thm1}
|f_1(\rho_{AB})-f_1(\rho_B^*)| < Cr^a\epsilon^b,
\end{equation}
for some constant $C, a, b$ which are specified in Table~\ref{table:ineq}. Here, $U$ is the type-I $\epsilon$-approximate disentangling unitary for $\rho_{AB}$, and $\rho_B^* = \T_A(U\rho_{AB}U^{\dagger})$.
\end{theorem}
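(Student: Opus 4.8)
The plan is to reduce the estimate to a single continuity-bound inequality from Table~\ref{table:cont_bound}, applied to a pair of states whose trace distance has already been controlled in Lemma~\ref{lemma:td}. Write $\rho_{AB}^* = U\rho_{AB}U^\dagger$. Every $f_1 \in F_1$ is a spectral function: the von Neumann, R\'enyi, and Tsallis entropies depend on the state only through its eigenvalues. Hence $f_1$ is invariant under the unitary conjugation by $U$, giving $f_1(\rho_{AB}) = f_1(\rho_{AB}^*)$, and tensoring with the rank-one factor $\proj{0}{0}_A$ merely appends zero eigenvalues and so leaves the nonzero spectrum---and therefore $f_1$---unchanged, giving $f_1(\proj{0}{0}_A\otimes\rho_B^*) = f_1(\rho_B^*)$. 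Combining the two identities,
\begin{equation}
|f_1(\rho_{AB}) - f_1(\rho_B^*)| = |f_1(\rho_{AB}^*) - f_1(\proj{0}{0}_A\otimes\rho_B^*)|,
\end{equation}
so it remains only to bound the difference of $f_1$ on two genuinely nearby states.

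Next I would feed the trace-distance estimate $T(\rho_{AB}^*, \proj{0}{0}_A\otimes\rho_B^*) \le 2\sqrt{r\epsilon}$ of Lemma~\ref{lemma:td} into the appropriate row of Table~\ref{table:cont_bound}, with the rank parameter $r=\text{rank}(\rho_{AB})=\text{rank}(\rho_{AB}^*)$ dominating both states. The closing step is algebraic: substitute $T_\rho = 2\sqrt{r\epsilon} = 2r^{1/2}\epsilon^{1/2}$ and discard non-positive lower-order terms. For R\'enyi with $\alpha\in(0,1)$, dropping $(1-T_\rho)^\alpha-1\le0$ and using $(r-1)^{1-\alpha}\le r^{1-\alpha}$ gives $\frac{1}{1-\alpha}r^{1-\alpha}T_\rho^\alpha = \frac{2^\alpha}{1-\alpha}r^{1-\alpha/2}\epsilon^{\alpha/2}$; the case $\alpha>1$ and both Tsallis cases are immediate substitutions, reproducing exactly the corresponding entries of Table~\ref{table:ineq}.

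The step I expect to require the most care is the von Neumann row, whose continuity bound $T_\rho\log(r-1) + h(T_\rho)$, with $h$ the binary entropy, does not collapse to a monomial by inspection. To reach the stated $2r^{3/4}\epsilon^{1/4}$ I would first prove the auxiliary estimate $T_\rho\log(r-1) + h(T_\rho) \le \sqrt{2\,r\,T_\rho}$ for $r\ge 2$ and $T_\rho\in[0,1]$, bounding $h(T_\rho)$ by a constant multiple of $\sqrt{T_\rho}$ and absorbing the $\log(r-1)$ factor into $\sqrt{r}$; this is the source of the $\epsilon^{1/4}$ scaling, since $\sqrt{T_\rho}\propto(r\epsilon)^{1/4}$. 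Only then would I substitute $T_\rho = 2\sqrt{r\epsilon}$. Establishing this monomialization inequality is the genuine analytic obstacle, and it is also where the degenerate pure-state case $r=1$ must be treated separately. A secondary, bookkeeping concern is justifying that $\text{rank}(\rho_B^*)$ is controlled by $r=\text{rank}(\rho_{AB})$ rather than by $\dim B$---true in the well-disentangled regime because $\rho_{AB}^*\approx\proj{0}{0}_A\otimes\rho_B^*$ forces the two supports to nearly coincide. I would place both the monomialization lemma and this rank discussion in \textbf{Appendix~\ref{sec:prooflem1}}, alongside the continuity-bound simplifications already collected there.
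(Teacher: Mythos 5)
Your proposal matches the paper's proof in Appendix~\ref{sec:proofthms} essentially step for step: spectral invariance under $U$ and under padding with $\proj{0}{0}_A$ reduces the claim to a continuity bound from Table~\ref{table:cont_bound}, Lemma~\ref{lemma:td} supplies $T_\rho = 2\sqrt{r\epsilon}$, and the same elementary estimates ($\log(r-1)\le\sqrt{r}$, $x\log(1/x)\le\sqrt{x}$, $-(1-x)\log(1-x)\le x$) monomialize the von Neumann case. Your closing remark that $\textnormal{rank}(\rho_B^*)$ is not automatically bounded by $\textnormal{rank}(\rho_{AB})$ is a legitimate point of care that the paper's proof passes over silently, but it does not change the argument's structure.
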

\begin{proof}
See the details of the proof in \textbf{Appendix~\ref{sec:proofthms}}. 
\end{proof}

This theorem shows that the von Neumann, R\'{e}nyi, and Tsallis entropies of $\rho_{AB}$ can be preserved in the partially traced state $\rho_B^*$ using disentanglement. Therefore, with the type-\RN{1} $\epsilon$-approximate disentangling unitary $U$, the quantum entropies of $\rho_{AB}$ can be calculated using $\rho_B^*$, which has fewer qubits and thus less complexity. \textbf{Theorem~\ref{thm:1}} can be proved using \textbf{Lemma~\ref{lemma:td}} and the continuity bounds of quantum entropies.

\subsection{Disentanglement Preserves Quantum Distance Measures}
\begin{theorem}\label{thm:2}
For $n$-qubit states $\rho_{AB}, \sigma_{AB}$ and $f_2 \in F_2$, the following property holds:
\begin{equation}\label{eq:thm2}
|f_2(\rho_{AB}, \sigma_{AB})-f_2(\rho_B^*, \sigma_B^*)| < Cr^a\epsilon^b,
\end{equation}
for some constant $C, a, b$ which are specified in Table~\ref{table:ineq}. Here, $U$ is the type-II $\epsilon$-approximate disentangling unitary for $\rho_{AB}$ and $\sigma_{AB}$, where $\rho_B^* = \T_A(U\rho_{AB}U^{\dagger}), \sigma_B^* = \T_A(U\sigma_{AB}U^{\dagger})$.
\end{theorem}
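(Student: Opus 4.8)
The plan is to reduce \textbf{Theorem~\ref{thm:2}} to \textbf{Lemma~\ref{lemma:td}} together with the continuity bounds for $T$ and $F$ collected in Table~\ref{table:cont_bound}, exploiting two invariance properties of the measures in $F_2$. The first observation is that, by \textbf{Definition~\ref{def2}}, a type-\RN{2} $\epsilon$-approximate disentangling unitary $U$ is simultaneously a type-\RN{1} $\epsilon$-approximate disentangling unitary for each of $\rho_{AB}$ and $\sigma_{AB}$. Hence \textbf{Lemma~\ref{lemma:td}} applies to both, giving
\begin{align}
T(U\rho_{AB}U^{\dagger}, \ket{0}\bra{0}_A\otimes\rho_B^*) &\le 2\sqrt{r\epsilon}, \\
T(U\sigma_{AB}U^{\dagger}, \ket{0}\bra{0}_A\otimes\sigma_B^*) &\le 2\sqrt{r\epsilon},
\end{align}
where I take $r=\max\{\mathrm{rank}(\rho_{AB}),\mathrm{rank}(\sigma_{AB})\}$ so that a single $r$ controls both bounds (the estimate is monotone in $r$).

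Next I would use that both $f_2\in F_2$ are invariant under a common unitary and under tensoring with a fixed pure state. Concretely, $f_2(U\rho_{AB}U^{\dagger}, U\sigma_{AB}U^{\dagger})=f_2(\rho_{AB},\sigma_{AB})$ by unitary invariance of the trace norm and of the fidelity, while $f_2(\ket{0}\bra{0}_A\otimes\rho_B^*, \ket{0}\bra{0}_A\otimes\sigma_B^*)=f_2(\rho_B^*,\sigma_B^*)$ because the shared factor $\ket{0}\bra{0}_A$ splits off the trace distance and the fidelity is multiplicative over tensor products with $F(\ket{0}\bra{0}_A,\ket{0}\bra{0}_A)=1$. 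It is essential here that $U$ is one unitary acting on \emph{both} states; this is exactly what the type-\RN{2} condition supplies, and it is why \textbf{Theorem~\ref{thm:2}} does not reduce to two independent invocations of \textbf{Theorem~\ref{thm:1}}.

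Combining these, I set $\rho_1=U\rho_{AB}U^{\dagger}$, $\rho_2=\ket{0}\bra{0}_A\otimes\rho_B^*$ and $\sigma_1=U\sigma_{AB}U^{\dagger}$, $\sigma_2=\ket{0}\bra{0}_A\otimes\sigma_B^*$, so that $f_2(\rho_1,\sigma_1)=f_2(\rho_{AB},\sigma_{AB})$, $f_2(\rho_2,\sigma_2)=f_2(\rho_B^*,\sigma_B^*)$, and $T_\rho:=T(\rho_1,\rho_2)$, $T_\sigma:=T(\sigma_1,\sigma_2)$ are each at most $2\sqrt{r\epsilon}$. For the trace distance, the continuity bound $|T(\rho_1,\sigma_1)-T(\rho_2,\sigma_2)|\le T_\rho+T_\sigma$ then immediately yields
\begin{equation}
|T(\rho_{AB},\sigma_{AB})-T(\rho_B^*,\sigma_B^*)| \le 4\sqrt{r\epsilon}=4r^{1/2}\epsilon^{1/2},
\end{equation}
matching Table~\ref{table:ineq}. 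For the fidelity I substitute the same estimates into the bound $|F(\rho_1,\sigma_1)-F(\rho_2,\sigma_2)|\le \arccos(1-T_\rho)^2+\arccos(1-T_\sigma)^2$ from Table~\ref{table:cont_bound} and use monotonicity of $\arccos$ to replace $T_\rho,T_\sigma$ by their common bound $2\sqrt{r\epsilon}$.

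The main obstacle is the fidelity case, where the exponent of $\epsilon$ drops from $1/2$ to $1/4$. The point is that one must control $\arccos$ near argument $1$ through an estimate of the form $\arccos(1-y)=O(\sqrt{y})$, which converts a trace distance of order $\sqrt{r\epsilon}$ into a fidelity gap of order $(r\epsilon)^{1/4}$; carrying the constant carefully through this step (and checking that for $r\epsilon$ small the arguments remain in the valid range of $\arccos$) is what produces the stated $2\pi r^{1/4}\epsilon^{1/4}$. I would be particularly attentive to the fidelity/Bures-angle convention, since it is precisely the square inside $\arccos(1-T)^2$, rather than an unsquared $[\arccos(1-T)]^2$, that yields the exponent $1/4$ instead of $1/2$; getting this parsing right is the crux of reproducing the table entry.
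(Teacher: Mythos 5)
Your proposal is correct and follows essentially the same route as the paper's own proof: invoke \textbf{Lemma~\ref{lemma:td}} for each state via the type-\RN{2} condition, use unitary invariance and tensor-factor splitting of $T$ and $F$ to identify the two pairs being compared, and then substitute $T_\rho=T_\sigma=2\sqrt{r\epsilon}$ into the continuity bounds of Table~\ref{table:cont_bound}, with the $\arccos(1-x)=O(\sqrt{x})$ estimate producing the $\epsilon^{1/4}$ exponent for fidelity. You even correctly pin down the parsing $\arccos\bigl((1-T)^2\bigr)$ and the $r^{1/4}\epsilon^{1/4}$ bookkeeping, which the paper's appendix treats more tersely.
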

\begin{proof}
See the details of the proof in \textbf{Appendix~\ref{sec:proofthms}}. 
\end{proof}

This theorem shows that the trace distance and fidelity of $\rho_{AB}$ and $\sigma_{AB}$ can be preserved in the partially traced states $\rho_B^*$ and $\sigma_B^*$ using disentanglement. Thus, with the type-\RN{2} $\epsilon$-approximate disentangling unitary $U$, we can calculate the quantum distance measures of $\rho_{AB}$ and $\sigma_{AB}$ using $\rho_B^*$ and $\sigma_B^*$, which have fewer qubits and therefore less complexity. Also, \textbf{Theorem~\ref{thm:2}} can be proved using \textbf{Lemma~\ref{lemma:td}} and the continuity bounds of quantum distance measures.

Disentanglement is a special form of dimension reduction technique. There are other dimension reduction techniques, such as quantum variational autoencoders~\cite{khoshaman2018quantum}. It seems obvious that any dimension reduction technique with zero error will preserve quantum entropies and distance measures. However, with an error, it is important to prove that the difference in the quantities does not grow exponentially due to the error. For example, if the difference in the quantities is $2^n\epsilon$ due to the error $\epsilon$ of the dimension reduction technique, we cannot use that technique for estimating the quantities. Therefore, we argue that \textbf{Theorems~\ref{thm:1}} and \textbf{\ref{thm:2}} are important statements.

In the next section, we focus on finding an $\epsilon$-approximate disentangling unitary $U$ using a quantum neural network. Note that if $U$ can disentangle $\rho_{AB}$ to any pure state $\ket{\psi}_A$, not only $\ket{0}_A$, \textbf{Theorems~\ref{thm:1}} and \textbf{\ref{thm:2}} still hold.

%%%%% 
\section{Main Result: Disentangling Quantum Neural Networks} \label{sec:Main}
We propose \textit{disentangling quantum neural networks}, namely DEQNN, a quantum neural network that uses disentanglement for estimating quantum entropies and distance measures. DEQNN, which is used to calculate the quantum entropies and distance measures of $\rho_{AB}$ and $\sigma_{AB}$, employs the following cost function:
\begin{equation}
    C(\theta) = 1-\frac{1}{4} \T((\rho_A(\theta)+\sigma_A(\theta))^2),
\end{equation}
where $U(\theta)$ is the unitary representation of the quantum neural network, $\rho_A(\theta) = \T_B(U(\theta)\rho_{AB} U(\theta)^{\dagger})$ and $\sigma_A(\theta) = \T_B(U(\theta)\sigma_{AB} U(\theta)^{\dagger})$. {Note that if a quantum state $\rho$ consists of two separable subsystems, such that $\rho = \rho_i(\theta)\otimes\rho_j(\theta)$, we have $\text{Tr}(S(\rho_i(\theta)\otimes\rho_j(\theta))) = \text{Tr}(\rho_i(\theta)\rho_j(\theta))$, where $S$ is the swap operator defined as $S|\alpha\rangle_A|\beta\rangle_B = |\beta\rangle_A|\alpha\rangle_B$ for any pure states $|\alpha\rangle_A$ and $|\beta\rangle_B$. This is easily proved using the spectral decomposition of $\rho_i(\theta)$ and $\rho_j(\theta)$.~\cite{ekert2002direct}. The quantum circuit for performing the swap test is shown in Figure~\ref{fig:swap}.}

\begin{figure}[h]
    \includegraphics[width=7.5cm]{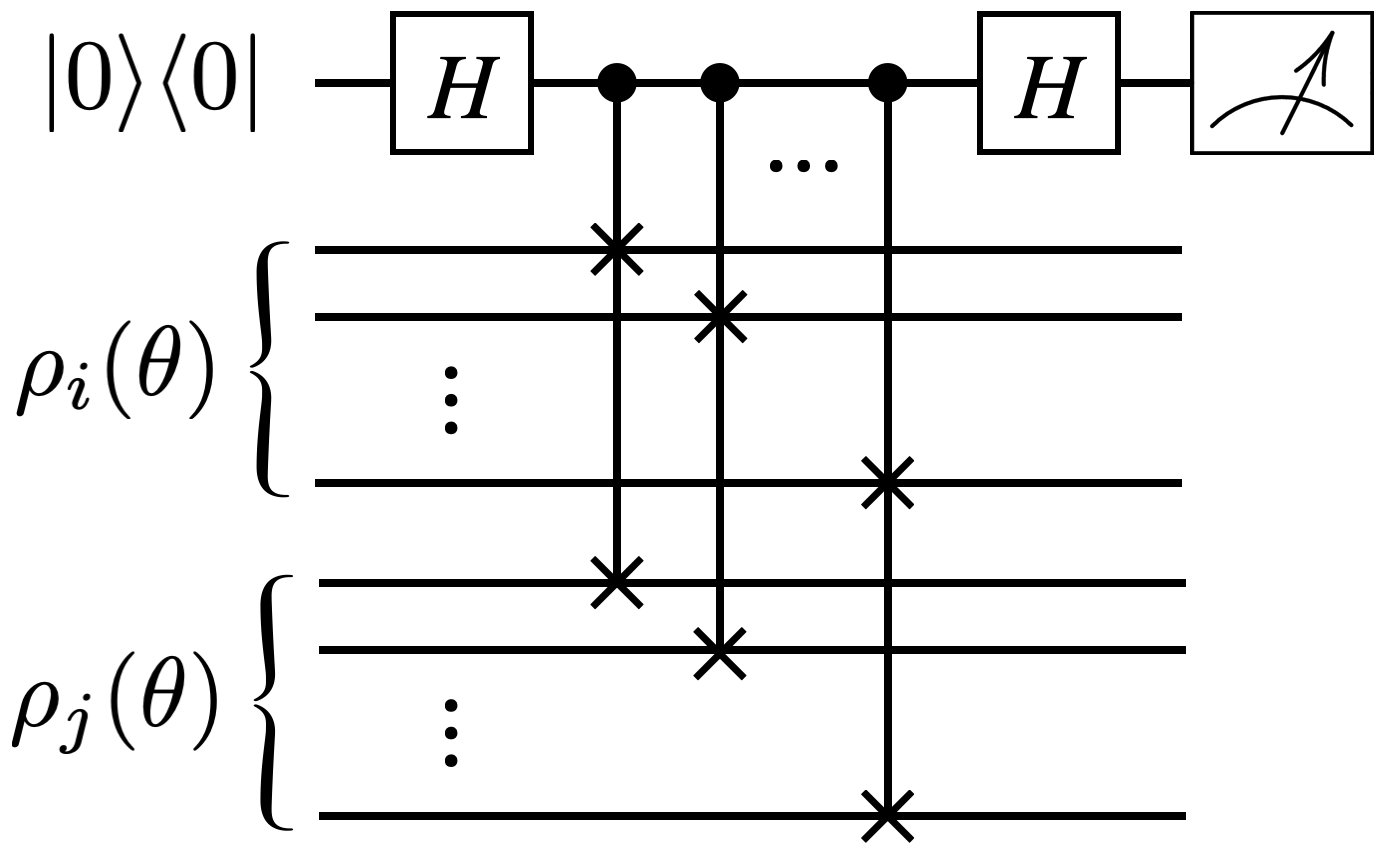}
    \caption{\textbf{Quantum circuit for the swap test.} The result obtained from this circuit is $\text{Tr}(\rho_i(\theta)\rho_j(\theta))$, and this value can be used to design the loss function of the DEQNN.}
    \label{fig:swap}
    \centering
\end{figure}

Hence, the cost function of DEQNN can be written as:
\begin{equation}\label{eq:swap}
    C(\theta) = 1-\frac{1}{4} \sum_{1\leq i,j\leq 2}\T\left(S_A (\rho_i(\theta)\otimes\rho_j(\theta))\right),
\end{equation}
where $\rho_1 = \rho_{AB}$, $\rho_2 = \sigma_{AB}$, $\rho_i(\theta) = U(\theta) \rho_i U^{\dagger}(\theta)$, and $S_A$ is the swap operator acting on system $A$. The cost function $C(\theta)$ can be calculated using the swap test~\cite{garcia2013swap}. When $\theta$ is fully optimized, we can assume that the quantum entropies of $\rho_B(\theta)$, $\sigma_B(\theta)$, and $\rho_{AB}$, $\sigma_{AB}$ are almost the same. We prove this statement in the next section.

\subsection{Cost Function Analysis}\label{sec:Cost}
The design of the cost function was inspired by the concept of purity. Purity is a measure of how mixed a state is. The purity of $\rho$ is defined as $\T(\rho^2)$. If the purity of a state is close to $1$, it can be approximated as a pure state.

\begin{lemma}\label{lemma:cost}
Suppose $C(\theta) < \epsilon$. There exists a pure state $\ket{\psi}$ that satisfies
\begin{equation}\label{eq:11}
\bra{\psi}\rho_A(\theta)\ket{\psi} \geq 1-O(\epsilon), \;\;\textnormal{and}
\end{equation}
\begin{equation}
\bra{\psi}\sigma_A(\theta)\ket{\psi} \geq 1-O(\epsilon).
\end{equation}
\end{lemma}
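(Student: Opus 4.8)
The plan is to unpack the cost function and show that each of the two terms it contains is separately small, then translate a statement about purity into a statement about overlap with a single pure state. First I would observe that by the definition in Eq.~(\ref{eq:swap}) and the swap-test identity $\T(S_A(\rho_i(\theta)\otimes\rho_j(\theta))) = \T(\rho_{i,A}(\theta)\rho_{j,A}(\theta))$, the cost function can be rewritten as $C(\theta) = 1 - \frac{1}{4}\T((\rho_A(\theta)+\sigma_A(\theta))^2)$, where $\rho_A(\theta)$ and $\sigma_A(\theta)$ are the reduced states on system $A$. Expanding the square gives $C(\theta) = 1 - \frac{1}{4}\bigl(\T(\rho_A(\theta)^2) + 2\T(\rho_A(\theta)\sigma_A(\theta)) + \T(\sigma_A(\theta)^2)\bigr)$. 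Since each purity and the cross term are at most $1$ (the latter by Cauchy--Schwarz, $\T(\rho_A\sigma_A)\le\sqrt{\T(\rho_A^2)\T(\sigma_A^2)}\le 1$), the assumption $C(\theta)<\epsilon$ forces each of the three terms to be within $O(\epsilon)$ of its maximal value $1$; in particular $\T(\rho_A(\theta)^2)\ge 1-O(\epsilon)$ and $\T(\sigma_A(\theta)^2)\ge 1-O(\epsilon)$.

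Next I would convert high purity into near-purity of the state itself. For any density operator $\rho_A(\theta)$ with largest eigenvalue $\lambda_{\max}$, one has $\T(\rho_A(\theta)^2)=\sum_k\lambda_k^2 \le \lambda_{\max}\sum_k\lambda_k = \lambda_{\max}$, so $\lambda_{\max}\ge 1-O(\epsilon)$. Taking $\ket{\psi}$ to be the eigenvector of $\rho_A(\theta)$ associated with $\lambda_{\max}$ immediately yields $\bra{\psi}\rho_A(\theta)\ket{\psi}=\lambda_{\max}\ge 1-O(\epsilon)$, which is the first claimed inequality.

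The genuinely delicate step is that the lemma requires a \emph{single} pure state $\ket{\psi}$ witnessing high overlap for \emph{both} $\rho_A(\theta)$ and $\sigma_A(\theta)$ simultaneously, whereas the argument above produces the top eigenvector of each state independently, and a priori these need not coincide. This is where I expect the main obstacle to lie. To close the gap I would use the cross term: the bound $\T(\rho_A(\theta)\sigma_A(\theta))\ge 1-O(\epsilon)$ says the two reduced states are not merely each close to pure but also close to \emph{each other}. Concretely, combining $\T(\rho_A^2),\T(\sigma_A^2)\ge 1-O(\epsilon)$ with the cross-term bound controls $\T((\rho_A(\theta)-\sigma_A(\theta))^2)=\T(\rho_A^2)-2\T(\rho_A\sigma_A)+\T(\sigma_A^2)$, which one checks is $O(\epsilon)$; hence $\rho_A(\theta)$ and $\sigma_A(\theta)$ are $O(\sqrt{\epsilon})$-close in Hilbert--Schmidt (and thus trace) norm. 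Since each is within $O(\epsilon)$ of a rank-one projector and they are mutually $O(\sqrt{\epsilon})$-close, their top eigenvectors must have overlap $1-O(\sqrt{\epsilon})$, so choosing $\ket{\psi}$ to be the top eigenvector of, say, the average $\tfrac{1}{2}(\rho_A(\theta)+\sigma_A(\theta))$ gives a single witness satisfying both inequalities simultaneously.

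I would finish by noting that all the $O(\cdot)$ constants are absolute and independent of the Hilbert-space dimension, since every estimate used above (the purity-to-eigenvalue inequality, Cauchy--Schwarz, and the norm comparisons) is dimension-free; this is essential for the downstream claim that DEQNN avoids an exponential blow-up. The only book-keeping to watch is that the overlap bound for $\ket{\psi}$ against each individual state inherits an $O(\sqrt{\epsilon})$ rather than $O(\epsilon)$ rate from the mismatch step, which is still consistent with the stated $1-O(\epsilon)$ provided the lemma's $O(\epsilon)$ is read in the loose sense of ``vanishing as $\epsilon\to 0$''; if a sharp $O(\epsilon)$ is required for both, a more careful perturbation argument on the shared eigenvector would be needed.
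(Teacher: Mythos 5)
Your first two steps coincide exactly with the paper's proof: expand $C(\theta)$ into $\T(\rho_A^2)$, $\T(\sigma_A^2)$ and the cross term $\T(\rho_A\sigma_A)$, note each is at most $1$ so each must individually be $\geq 1-O(\epsilon)$, and then pass from purity to a large top eigenvalue via $\sum_k\lambda_k^2\leq\lambda_{\max}$ (the paper gets $p_1>1-4\epsilon$ from $p_1^2+(1-p_1)^2\geq\T(\rho_A^2)>1-4\epsilon$). You also correctly identify the single delicate point, namely producing one witness for both states. Where you diverge is in how you close that gap, and your route as written falls quantitatively short of the lemma: you pass through $\norm{\rho_A-\sigma_A}_{HS}=O(\sqrt{\epsilon})$ and then invoke an eigenvector-perturbation statement at the $1-O(\sqrt{\epsilon})$ level, which only yields $\bra{\psi}\sigma_A(\theta)\ket{\psi}\geq 1-O(\sqrt{\epsilon})$ rather than the stated $1-O(\epsilon)$. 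You flag this yourself, but it is avoidable in two ways. First, your own route closes to $O(\epsilon)$ if you keep everything squared: since $\norm{\proj{\psi_1}{\psi_1}-\proj{\phi_1}{\phi_1}}_{HS}^2=2-2|\inn{\psi_1}{\phi_1}|^2$ and the squared Hilbert--Schmidt distance you computed is $O(\epsilon)$, the overlap $|\inn{\psi_1}{\phi_1}|^2$ is already $1-O(\epsilon)$, not merely $1-O(\sqrt{\epsilon})$. Second, the paper avoids the detour entirely by bounding the cross term directly in the joint eigenbases: writing $\T(\rho_A\sigma_A)=\sum_{i,j}p_iq_j|\inn{\psi_i}{\phi_j}|^2\leq|\inn{\psi_1}{\phi_1}|^2+(1-p_1q_1)\leq|\inn{\psi_1}{\phi_1}|^2+8\epsilon$, so $|\inn{\psi_1}{\phi_1}|^2>1-10\epsilon$, and then $\bra{\psi_1}\sigma_A\ket{\psi_1}\geq q_1|\inn{\psi_1}{\phi_1}|^2>1-14\epsilon$ with $\ket{\psi}=\ket{\psi_1}$ the top eigenvector of $\rho_A(\theta)$. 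So your proposal is sound in structure and dimension-free as required, but the last step needs the sharpening above to actually deliver the $1-O(\epsilon)$ rate claimed in the lemma.
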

\begin{proof}
See the details of the proof in \textbf{Appendix~\ref{sec:prooflem2}}. 
\end{proof}

The above lemma implies that optimizing $C(\theta)$ leads $\rho_A(\theta)$ and $\sigma_A(\theta)$ to an equivalent pure state. It also drives $U(\theta)$ towards a type-\RN{2} disentangling unitary for $\rho_{AB}$ and $\sigma_{AB}$, with only a small error. After optimizing $\theta$ by minimizing $C(\theta)$, the values of quantum entropies and distance measures are preserved in $\rho_B(\theta)$ and $\sigma_B(\theta)$. Thus, we call system $A$ the discarded system and system $B$ the preserved system, respectively.

\begin{theorem}\label{thm:3}
For $n$-qubit states $\rho_{AB}$ and $\sigma_{AB}$, and $f_1 \in F_1$, $f_2 \in F_2$, suppose $C(\theta) < \epsilon$, then
\begin{equation}
|f_1(\rho_{AB})-f_1(\rho_B(\theta))| < Cr^a\epsilon^b, \;\;\textnormal{and}
\end{equation}
\begin{equation}
|f_2(\rho_{AB}, \sigma_{AB})-f_2(\rho_B(\theta), \sigma_B(\theta))| < Cr^a\epsilon^b,
\end{equation}
for some constants $C$, $a$, and $b$. Note that $a$ and $b$ are the same values as in Table~\ref{table:ineq}.
\end{theorem}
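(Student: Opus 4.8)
The plan is to reduce Theorem~\ref{thm:3} to the already-established Theorems~\ref{thm:1} and~\ref{thm:2} by showing that the cost-function hypothesis $C(\theta) < \epsilon$ implies that $U(\theta)$ is an $\epsilon'$-approximate disentangling unitary (of type~\RN{1} for $\rho_{AB}$ and of type~\RN{2} for the pair $\rho_{AB}, \sigma_{AB}$) with $\epsilon'$ comparable to $\epsilon$. The crucial bridge is Lemma~\ref{lemma:cost}: if $C(\theta) < \epsilon$, there is a single pure state $\ket{\psi}$ on system $A$ with $\bra{\psi}\rho_A(\theta)\ket{\psi} \geq 1 - O(\epsilon)$ and $\bra{\psi}\sigma_A(\theta)\ket{\psi} \geq 1 - O(\epsilon)$. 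First I would translate this fidelity-with-a-pure-state condition into the disentangling-unitary condition of Definitions~\ref{def1} and~\ref{def2}. Since $\rho_A(\theta) = \T_B(U(\theta)\rho_{AB}U(\theta)^\dagger)$, the quantity $\bra{\psi}\rho_A(\theta)\ket{\psi}$ equals $\T\bigl(U(\theta)\rho_{AB}U(\theta)^\dagger \, (\proj{\psi}{\psi}_A \otimes I_B)\bigr)$, so the lemma says precisely that $U(\theta)$ disentangles $\rho_{AB}$ (and $\sigma_{AB}$) onto the pure state $\ket{\psi}_A$ up to error $O(\epsilon)$.

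The second step handles the fact that Definitions~\ref{def1} and~\ref{def2} are phrased in terms of $\ket{0}_A$, whereas Lemma~\ref{lemma:cost} gives an arbitrary pure state $\ket{\psi}_A$. Here I would invoke the remark made at the end of Section~\ref{sec:Disentanglement}, namely that Theorems~\ref{thm:1} and~\ref{thm:2} continue to hold when $U$ disentangles $\rho_{AB}$ to \emph{any} pure state $\ket{\psi}_A$ rather than specifically $\ket{0}_A$. Concretely, composing $U(\theta)$ with a fixed local unitary $V_A \otimes I_B$ sending $\ket{\psi}_A \mapsto \ket{0}_A$ leaves every trace, every entropy, and every distance measure invariant, and does not change the reduced state $\rho_B(\theta)$ up to this local relabeling; hence without loss of generality we may treat $U(\theta)$ as a genuine type-\RN{1}/type-\RN{2} $O(\epsilon)$-approximate disentangling unitary in the sense of the definitions.

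With that identification in place, the third step is a direct application. Setting $\rho_B^* = \rho_B(\theta)$ and $\sigma_B^* = \sigma_B(\theta)$, Theorem~\ref{thm:1} immediately yields $|f_1(\rho_{AB}) - f_1(\rho_B(\theta))| < C r^a (O(\epsilon))^b$ for each $f_1 \in F_1$, and Theorem~\ref{thm:2} yields the analogous bound for each $f_2 \in F_2$. The exponents $a$ and $b$ are inherited unchanged from Table~\ref{table:ineq}, and the $O(\epsilon)$ factor is absorbed into the constant $C$, so the stated inequalities follow. The only genuine content beyond bookkeeping is confirming that the implicit constant in Lemma~\ref{lemma:cost}'s $O(\epsilon)$ does not interact badly with the fractional powers $\epsilon^b$ (with $b \in \{1/4, 1/2, \alpha/2, q/2, 1\}$); since raising $O(\epsilon)$ to a fixed positive power still gives $O(\epsilon^b)$, this is harmless.

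I expect the main obstacle to be the second step rather than the final invocation: one must be careful that a \emph{single} $\ket{\psi}_A$ works simultaneously for both $\rho_{AB}$ and $\sigma_{AB}$ (as Lemma~\ref{lemma:cost} indeed guarantees, which is exactly why the cost function~(\ref{eq:swap}) symmetrizes over both states), because Theorem~\ref{thm:2} requires the \emph{same} disentangling unitary acting on the pair. If the two states were disentangled to different pure states, the type-\RN{2} hypothesis would fail and the distance-measure bound would not follow. The fact that the swap-test cost function couples $\rho_A(\theta)$ and $\sigma_A(\theta)$ through the cross term $\T(S_A(\rho_A(\theta)\otimes\sigma_A(\theta)))$ is what forces the shared target state, so I would make this coupling explicit when citing Lemma~\ref{lemma:cost}.
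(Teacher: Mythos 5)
Your proposal is correct and follows exactly the route the paper takes: the paper's own proof is the one-line observation that the result is immediate from Lemma~\ref{lemma:cost} together with Theorems~\ref{thm:1} and~\ref{thm:2}. You simply spell out the details the paper leaves implicit (translating the fidelity-with-a-pure-state condition into the disentangling-unitary definitions, handling the $\ket{\psi}_A$ versus $\ket{0}_A$ relabeling via the remark at the end of Section~\ref{sec:Disentanglement}, and noting that a single shared $\ket{\psi}_A$ is what makes the type-\RN{2} hypothesis hold), all of which is consistent with the paper's intent.
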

\begin{proof}
This is obvious by \textbf{Lemma~\ref{lemma:cost}}, and \textbf{Theorems~\ref{thm:1}} and \textbf{\ref{thm:2}}.
\end{proof}

Choosing how to split systems $A$ and $B$ is crucial for optimizing performance. In \textbf{Propositions~\ref{prop1} and~\ref{prop2}}, we established the conditions under which a perfect disentangling unitary $U$ exists and provided a proof for this result. Errors will occur because of the imperfect ansatz and the optimization process. When the optimization is complete, we can use $\rho_B$ and $\sigma_B$ for quantum entropy and distance measure estimation.

\begin{proposition}\label{prop1}
    For an $n$-qubit mixed state $\rho_{AB}$, if system $A$ is set to $(n - \log{r})$ qubits and system $B$ is set to $\log{r}$ qubits, where ${r} = 2^{\lceil \log \tilde{r} \rceil}$ with $\tilde{r} = \textnormal{rank}(\rho_{AB})$, then there exists a type-\RN{1} perfect disentangling unitary $U$ for this configuration. If $n \neq \log{r}$ (i.e., for quantum states that are not close to full rank), a perfect disentangling unitary still exists. (Note that all logarithms are to base 2, and since the number of qubits must be a natural number, we choose the value of $r$ to be a number that is the closest power of 2 to the rank.)
\end{proposition}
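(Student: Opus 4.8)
The plan is to reduce the existence of a perfect disentangling unitary to a rank-counting argument followed by the standard extension of a partial isometry to a unitary. First I would write the spectral decomposition $\rho_{AB} = \sum_{k=1}^{\tilde r}\lambda_k \ket{\phi_k}\bra{\phi_k}$, where $\{\ket{\phi_k}\}_{k=1}^{\tilde r}$ is an orthonormal set of eigenvectors with nonzero eigenvalues spanning the support of $\rho_{AB}$ inside $\mathcal{H}_{AB} = \mathbb{C}^{2^n}$. The key observation is that the type-\RN{1} perfect disentangling condition $\T(U\rho_{AB}U^{\dagger}\ket{0}\bra{0}_A\otimes I_B) = 1$ is equivalent to demanding that $U$ map the entire support of $\rho_{AB}$ into the subspace $\ket{0}_A \otimes \mathcal{H}_B$; that is, $U\rho_{AB}U^{\dagger} = \ket{0}\bra{0}_A \otimes \rho_B^*$ for some density operator $\rho_B^*$ on $\mathcal{H}_B$.

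Second, I would carry out the dimension bookkeeping that makes the construction possible. With system $A$ having $n - \log r$ qubits and system $B$ having $\log r$ qubits, the target subspace $\ket{0}_A \otimes \mathcal{H}_B$ has dimension $\dim \mathcal{H}_B = r$. Since $r = 2^{\lceil \log \tilde r \rceil} \geq \tilde r$, there is enough room to select $\tilde r$ orthonormal vectors $\ket{\chi_1},\dots,\ket{\chi_{\tilde r}}$ in $\mathcal{H}_B$. I would then define the linear map $V$ on $\textnormal{span}\{\ket{\phi_k}\}$ by $V\ket{\phi_k} = \ket{0}_A\ket{\chi_k}_B$. Because both $\{\ket{\phi_k}\}$ and $\{\ket{0}_A\ket{\chi_k}_B\}$ are orthonormal sets of the same cardinality $\tilde r$ sitting inside the common space $\mathcal{H}_{AB}$ of dimension $2^n$, the map $V$ is a partial isometry whose initial and final spaces have equal codimension $2^n - \tilde r$; hence it extends to a unitary $U$ on all of $\mathcal{H}_{AB}$ with $U\ket{\phi_k} = \ket{0}_A\ket{\chi_k}_B$.

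Third, I would verify the conclusion by direct computation: $U\rho_{AB}U^{\dagger} = \sum_k \lambda_k \ket{0}_A\ket{\chi_k}\bra{0}_A\bra{\chi_k} = \ket{0}\bra{0}_A \otimes \rho_B^*$ with $\rho_B^* = \sum_k \lambda_k \ket{\chi_k}\bra{\chi_k}$, so that $\T(U\rho_{AB}U^{\dagger}\ket{0}\bra{0}_A\otimes I_B) = \sum_k \lambda_k = 1$, which is exactly the perfect disentangling property. For the final remark, when $\rho_{AB}$ is not full rank one has $\tilde r < 2^n$, so $\log r < n$ whenever $\tilde r \leq 2^{n-1}$; then system $A$ retains at least one qubit and the disentanglement genuinely reduces the Hilbert-space dimension.

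Since the argument is essentially linear-algebraic bookkeeping, the only genuinely delicate point, and the one I expect to require the most care, is the rounding convention $r = 2^{\lceil \log \tilde r \rceil}$: I must confirm that this $r$ is simultaneously a power of two, so that $\log r$ is an integer qubit count for system $B$, and an upper bound for $\tilde r$, so that $\mathcal{H}_B$ can accommodate all $\tilde r$ orthonormal image vectors. Both properties follow from the ceiling in the exponent, but I would state this explicitly to justify why the rank is rounded up to the nearest power of two rather than used directly.
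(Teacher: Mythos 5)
Your proposal is correct and follows essentially the same route as the paper's proof: spectral decomposition of $\rho_{AB}$, mapping its eigenvectors onto orthonormal states of the form $\ket{0}_A\ket{\chi_k}_B$, and extending this correspondence to a full unitary (the paper does this by Gram--Schmidt completion of both bases and defining $U=\sum_i \proj{\phi_i'}{\psi_i}$, which is the same partial-isometry extension you describe). Your explicit attention to the rounding $r=2^{\lceil\log\tilde r\rceil}\geq\tilde r$ is a welcome clarification that the paper leaves implicit.
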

\begin{proof}
See the details of the proof in \textbf{Appendix~\ref{sec:proofprop1}}. 
\end{proof}

\begin{proposition}\label{prop2}
    For $n$-qubit mixed states $\rho_{AB}$ and $\sigma_{AB}$, if system $A$ is set to $(n - \log {r})$ qubits and system $B$ is set to $\log {r}$ qubits, where $V = \{v_i\}_{i=1}^{r_1}$ is the set of basis vectors obtained from the spectral decomposition of the $r_1$-rank quantum state $\rho_{AB}$, and $W = \{w_i\}_{i=1}^{r_2}$ is the set for the $r_2$-rank quantum state $\sigma_{AB}$, with $\tilde{r} = \dim(\textnormal{span}\{V \cup W\})$ and ${r} = 2^{\lceil \log \tilde{r} \rceil}$, then there exists a type-\RN{2} perfect disentangling unitary $U$ for this configuration. If $n \neq \log{r}$, a perfect disentangling unitary still exists.
\end{proposition}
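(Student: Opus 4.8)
The plan is to reduce the existence of a type-\RN{2} perfect disentangling unitary to a single dimension-counting statement about one subspace that simultaneously contains the supports of both states. Write $\mathcal{H} = \mathcal{H}_A \otimes \mathcal{H}_B$ with $\dim \mathcal{H}_A = 2^{n-\log r}$ and $\dim \mathcal{H}_B = 2^{\log r} = r$, let $P = \ket{0}\bra{0}_A \otimes I_B$, and let $\mathcal{H}_0$ be the range of $P$, a subspace of dimension $r$. The central observation is that, because $V$ and $W$ consist of the eigenvectors associated with the nonzero eigenvalues in the spectral decompositions of $\rho_{AB}$ and $\sigma_{AB}$, we have $\text{supp}(\rho_{AB}) = \text{span}(V)$ and $\text{supp}(\sigma_{AB}) = \text{span}(W)$, so both supports lie inside the single subspace $S := \text{span}(V \cup W)$, with $\dim S = \tilde{r}$. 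This is exactly why one unitary can disentangle both states at once: it suffices to rotate the one subspace $S$ into $\mathcal{H}_0$.

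First I would establish the key inequality $\dim S = \tilde{r} \le 2^{\lceil \log \tilde{r}\rceil} = r = \dim \mathcal{H}_0$. Choosing an orthonormal basis $\{e_1,\dots,e_{\tilde{r}}\}$ of $S$ (the vectors in $V \cup W$ need not be orthonormal, but this is irrelevant since only $\dim S$ matters) and any $\tilde{r}$ orthonormal vectors $\{f_1,\dots,f_{\tilde{r}}\}$ in $\mathcal{H}_0$, which exist because $r \ge \tilde{r}$, the assignment $e_k \mapsto f_k$ defines a partial isometry $U_0 : S \to \mathcal{H}_0$. Next I would extend $U_0$ to a genuine unitary $U$ on all of $\mathcal{H}$: since $U_0$ is injective on $S$, its image $U_0(S)$ has dimension $\tilde{r}$, so $\dim S^\perp = 2^n - \tilde{r} = \dim (U_0(S))^\perp$; one may therefore pick any isometry from $S^\perp$ onto the orthogonal complement of $U_0(S)$, and the direct sum of the two maps is a unitary $U$ on $\mathcal{H}$ with $U(S) \subseteq \mathcal{H}_0$.

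Finally I would verify that this $U$ is a type-\RN{2} perfect disentangling unitary. Because $\text{supp}(\rho_{AB}) \subseteq S$, the operator $U\rho_{AB}U^{\dagger}$ is supported on $U(S) \subseteq \mathcal{H}_0$, i.e. $\T\left(U\rho_{AB}U^{\dagger}(\ket{0}\bra{0}_A \otimes I_B)\right) = 1$. Since $P$ is a projector and $U\rho_{AB}U^{\dagger}$ is a positive unit-trace operator with $\T(P\,U\rho_{AB}U^{\dagger}) = 1 = \T(U\rho_{AB}U^{\dagger})$, it must be supported on the range of $P$, so $U\rho_{AB}U^{\dagger} = P\,(U\rho_{AB}U^{\dagger})\,P = \ket{0}\bra{0}_A \otimes \rho_B^*$ where $\rho_B^* = \bra{0}_A U\rho_{AB}U^{\dagger}\ket{0}_A$ is a density matrix on $\mathcal{H}_B$. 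Running the identical argument with $W$ in place of $V$ gives $U\sigma_{AB}U^{\dagger} = \ket{0}\bra{0}_A \otimes \sigma_B^*$, so both conditions of \textbf{Definition~\ref{def2}} hold with $\epsilon = 0$. For the closing clause, the construction only needs $\log r \le n$ so that $\mathcal{H}_A$ is a well-defined register, and whenever $n \neq \log r$ the discarded system $A$ is nontrivial, giving a genuine reduction.

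The main obstacle is conceptual rather than computational: one must recognize that the correct object to rotate is the \emph{joint} support $S = \text{span}(V \cup W)$ rather than the two supports treated separately. Handling the states independently would in general force two distinct unitaries, whereas collapsing to one subspace of dimension $\tilde{r} \le r$ is precisely what allows a single $U$ to work for both. Once this is seen, the remainder is the standard linear-algebraic fact that a partial isometry between subspaces of equal codimension extends to a unitary, mirroring the type-\RN{1} argument of \textbf{Proposition~\ref{prop1}}.
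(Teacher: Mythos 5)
Your proposal is correct and follows essentially the same route as the paper: both proofs identify the joint support $S=\mathrm{span}(V\cup W)$ of dimension $\tilde r\le r$, map an orthonormal basis of $S$ onto product vectors lying in the $\ket{0}_A\otimes\mathcal{H}_B$ subspace, and extend this partial isometry to a full unitary (the paper does the extension explicitly via Gram--Schmidt and $U=\sum_i\proj{\phi_i'}{\psi_i}$). Your concluding support argument via $\T(P\,U\rho_{AB}U^{\dagger})=1$ is a mildly more abstract but equivalent way of reaching the paper's explicit computation of $U\ket{v_i}$ and $U\ket{w_i}$.
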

\begin{proof}
See the details of the proof in \textbf{Appendix~\ref{sec:proofprop2}}.
\end{proof}

Optimizing $C(\theta)$ could face severe challenges. One example is barren plateaus~\cite{mcclean2018barren}. It is known that global cost functions cause barren plateaus~\cite{cerezo2021cost}. If the dimension of system $A$ is large, the swap operator $S_A$ in equation~(\ref{eq:11}) is a {large local} observable, and DEQNN becomes vulnerable to barren plateaus. If the cost function has the form $C(\theta) = \text{Tr}(OV(\theta)\rho V^\dagger(\theta))$, and if $O$ is expressed as a sum of $\ell$-local operators (where $\ell$ represents the number of qubits each operator acts on), then for sufficiently small $\ell$ and a total of $n$ qubits, the gradient of the cost function vanishes polynomially with respect to $n$ when the number of layers is $\mathcal{O}(\log n)$. So, if the number of qubits $n$ is large, choosing the dimension of the discard system $A$ as $n - \log r$ qubits could be problematic. To avoid barren plateaus the dimension of $A$ should be constant. We can sequentially discard a constant number of qubits. We can discard the qubits until the dimension of the preserved system is larger than $\log{r}$, as defined in \textbf{Proposition~\ref{prop2}}.

\subsection{Generalization}
Our DEQNN can be extended to estimate quantum information quantities for any $m$ quantum states. The definition of a type-$m$ generalized $\epsilon$-approximate disentangling unitary below can be seen as a natural extension of \textbf{Definition~\ref{def2}}. Additionally, the conditions under which a single unitary can act as a perfect disentangling unitary for $m$ quantum states are specified in \textbf{Proposition~\ref{prop3}}.

\begin{definition}\label{def3}
For $n$-qubit mixed states $\{\rho_i\}_{i=1}^m$, a unitary transformation $U$ that satisfies for all $i\in\{1,\ldots,m\}$,
\begin{align}
\text{Tr}\left(U\rho_i U^{\dagger} \ket{0}\bra{0}_A \otimes I_B\right) \geq 1-\epsilon,
\end{align}
is termed a {generalized $\epsilon$-approximate $m$-disentangling unitary} for $\{\rho_i\}_{i=1}^m$. (Note that when $m=2$, we refer to it as a type-\RN{2} disentangling unitary for convenience.) If $\epsilon = 0$, then $U$ is called perfect $m$-disentangling unitary for $\{\rho_i\}_{i=1}^m$.
\end{definition}

\begin{proposition}\label{prop3}
    For $n$-qubit mixed states $\{\rho_i\}_{i=1}^m$, if system $A$ is set to $(n - \log {r})$ qubits and system $B$ is set to $\log {r}$ qubits, where $V_i = \{v_j\}_{j=1}^{r_i}$ is the set of basis vectors obtained from the spectral decomposition of the $r_i$-rank quantum state $\rho_i$, with $\tilde{r} = \dim(\textnormal{span}\{\cup_{i=1}^m V_i\})$ and ${r} = 2^{\lceil \log \tilde{r} \rceil}$, then there exists a perfect $m$-disentangling unitary $U$ for this configuration. If $n \neq \log {r}$, a perfect disentangling unitary still exists.
\end{proposition}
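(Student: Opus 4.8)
The plan is to imitate the constructions underlying \textbf{Propositions~\ref{prop1}} and \textbf{\ref{prop2}}, observing that the case of $m$ states reduces to a single statement about the \emph{common support subspace} of the family $\{\rho_i\}_{i=1}^m$. The central idea is that all $m$ states live inside one subspace $\mathcal{S}$ of dimension $\tilde{r}$, and a single unitary that rotates $\mathcal{S}$ into the ``disentangled'' target subspace $\ket{0}_A \otimes \mathcal{H}_B$ will automatically disentangle every member of the family at once. In this sense the $m$-state claim is the natural extension of the two-state argument, with $\textnormal{span}\{V \cup W\}$ replaced by $\textnormal{span}\{\cup_{i=1}^m V_i\}$.

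First I would set $\mathcal{S} = \textnormal{span}\{\cup_{i=1}^m V_i\}$, so that $\dim \mathcal{S} = \tilde{r}$ by definition. Since each $V_i$ is the eigenbasis obtained from the spectral decomposition of $\rho_i$, we have $\textnormal{supp}(\rho_i) \subseteq \mathcal{S}$ for every $i$; equivalently $\rho_i$ annihilates $\mathcal{S}^{\perp}$. Next I would record the dimension bound $\tilde{r} \leq r = 2^{\lceil \log \tilde{r}\rceil}$, which guarantees that the target subspace $\ket{0}_A \otimes \mathcal{H}_B$, of dimension $r$, is large enough to accommodate an isometric image of $\mathcal{S}$.

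Then I would construct $U$ explicitly. Choose an orthonormal basis $\{e_1,\ldots,e_{\tilde{r}}\}$ of $\mathcal{S}$ and orthonormal vectors $\{\ket{0}_A\otimes\ket{f_k}_B\}_{k=1}^{\tilde{r}}$ inside the target subspace (possible since $\tilde{r} \leq r$), and define a partial isometry $e_k \mapsto \ket{0}_A\otimes\ket{f_k}_B$. Because this map sends an orthonormal set to an orthonormal set, it extends to a full unitary $U$ on the $n$-qubit Hilbert space by completing both orthonormal systems to bases of the whole space. It then follows that for every $i$, $U\rho_i U^{\dagger}$ is supported in $\ket{0}_A\otimes\mathcal{H}_B$, hence has the block form $\ket{0}\bra{0}_A\otimes\rho_i^*$ with $\rho_i^* = \T_A(U\rho_i U^{\dagger})$, yielding $\T(U\rho_i U^{\dagger}\,\ket{0}\bra{0}_A\otimes I_B)=1$ for all $i$. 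By \textbf{Definition~\ref{def3}} this is exactly a perfect $m$-disentangling unitary. The edge-case remark ($n \neq \log r$, i.e.\ the states are not full rank so system $A$ retains at least one qubit) requires only noting that $\log r < n$ leaves the construction and the dimension count unchanged while producing a genuine reduction.

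The main obstacle is conceptual rather than computational: one must argue that a \emph{single} unitary can disentangle all $m$ states simultaneously, not merely one at a time. The resolution is precisely the use of the common subspace $\mathcal{S}$ — since every $\rho_i$ is supported in $\mathcal{S}$, rotating $\mathcal{S}$ as a whole into the target subspace handles all states at once, with no conflict between them. The only quantitative constraint is the single inequality $\tilde{r} \leq r$, which holds by the choice $r = 2^{\lceil \log \tilde{r}\rceil}$; thus, once the supports are consolidated, the $m$-state problem is no harder than the one-state problem.
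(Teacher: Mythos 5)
Your proposal is correct and follows essentially the same route as the paper: the paper's own proof of \textbf{Proposition~\ref{prop3}} is simply the extension of the construction in \textbf{Appendix~\ref{sec:proofprop2}}, where a basis of the common support subspace $\textnormal{span}\{\cup_{i=1}^m V_i\}$ is mapped by a unitary (completed via Gram--Schmidt) onto states of the form $\ket{\phi_i}\ket{0}^{\otimes(n-\log r)}$. Your observation that a single rotation of the shared support handles all $m$ states simultaneously, together with the bound $\tilde{r}\le r$, is exactly the content of that argument.
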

\begin{proof}
The proof can be derived as an extension of \textbf{Proposition~\ref{prop2}}, following the same approach as in \textbf{Appendix~\ref{sec:proofprop2}}.
\end{proof}

In this setting, the generalized cost function of the DEQNN to be used for estimating the physical quantities of $m$ quantum states is set as follows:
\begin{equation}\label{eq:gen_cost}
    C_m(\theta) = 1-\frac{1}{m^2} \sum_{1\leq i,j\leq m}\T(S_A (\rho_i(\theta)\otimes\rho_j(\theta))),
\end{equation}
where $\rho_i(\theta) = U(\theta) \rho_i U^{\dagger}(\theta)$ and $S_A$ is the swap operator acting on system $A$. Solving the optimization problem $\min_\theta C_m(\theta)$ is ultimately equivalent to finding a disentangling unitary for the $m$ quantum states $\{\rho_i\}_{i=1}^m$. The preserved state $\T_A(\rho_i(\theta))$ can be used to estimate the quantum entropies of $\rho_i$, and $\T_A(\rho_i(\theta))$ and $\T_A(\rho_j(\theta))$ can be used to estimate the distance measures of $\rho_i$ and $\rho_j$. If we only want to estimate the quantum entropies, then we can change $C_m(\theta)$ to:
\begin{equation}\label{eq:gen_cost2}
    C_m(\theta) = 1-\frac{1}{m} \sum_{1\leq i\leq m}\T(S_A (\rho_i(\theta)\otimes\rho_i(\theta))).
\end{equation}

% We argue that our method is especially beneficial when $m$ is large. If we want to estimate the quantum entropies and distance measures of small-rank $\rho$ that are generated from low-depth circuits and a large number of states, then our method will have an advantage.

% We can define generalization for type-\RN{1},\RN{2} disentangling unitaries.

{Suppose that $\{\rho_i\}_{i=1}^m$ are quantum states of small rank that can be prepared using low-depth circuits. Our learning algorithm could then offer an advantage in situations where we want to simultaneously compute quantum entropy and distance measures for a large $m$ (i.e., for a large number of quantum states). Let the set of bases obtained from the spectral decomposition of the quantum state $\rho_i$ be ${V}_i=\{{v}_i\}$, and define ${S}_i=\text{span}\{{V}_i\}$ and ${\Omega}=\bigcup_{i=1}^m {S}_i$. If the condition $|{\Omega}| \le |{S}_i|,~\forall i \in \{1, \ldots, m\}$ (i.e., the condition where the cardinality does not increase even when taking the union of the span sets) is satisfied, then the quantum neural network $U(\theta)$ for learning the disentangling unitary of $\{\rho_i\}_{i=1}^m$ with the cost function $C(\theta)$ in equation~\eqref{eq:gen_cost2} will also have low complexity. In this case, estimating each quantity for every quantum state $\rho_i$ at once through a low-complexity quantum neural network is possible, which would be less costly than learning individual unitary operators for each quantum state. While this might generally be considered a rather strong condition, if satisfied, there should be no issue with trainability without considering the circuit complexity.}

\begin{figure}[t]
    \includegraphics[width=8.5cm]{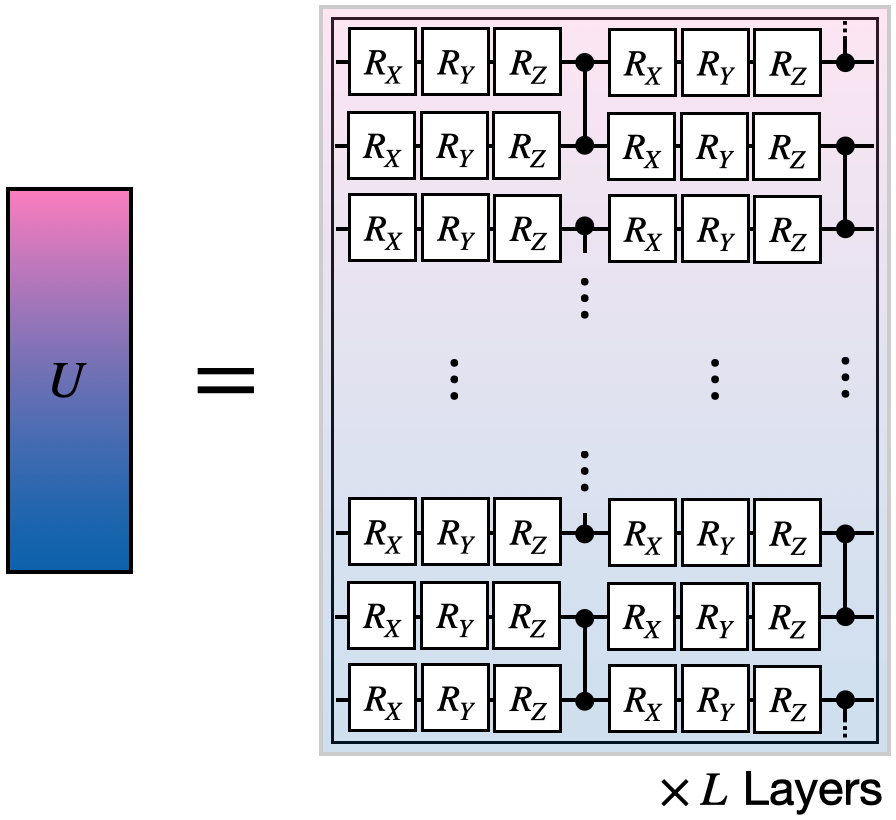}
    \caption{\textbf{Illustration of the ansatz used in the numerical simulation.} This unitary operator is applied to the $U$ part of the disentangling quantum neural network depicted in Figures 1 and 2, with measurements performed based on the desired system partitioning.}
    \label{fig:ansatz}
    \centering
\end{figure}

\section{Numerical Simulation}\label{sec:simulation}
We conducted a numerical simulation to demonstrate that the disentanglement process of DEQNN preserves entropies and distance measures. The simulation results are divided into two categories based on the number of qubits and the rank of the density matrix. The ansatz used in the numerical simulation is shown in Figure~\ref{fig:ansatz}, and the circuit structure can be modified as needed. To minimize issues related to barren plateaus, we used a quantum circuit with $\mathcal{O}(1)$ depth. For a more detailed explanation of the numerical simulation, refer to \textbf{Appendix~\ref{app:simulation}}.

%%%%%%%%%%%%%% 4큐빗 결과 표
\begin{figure*}
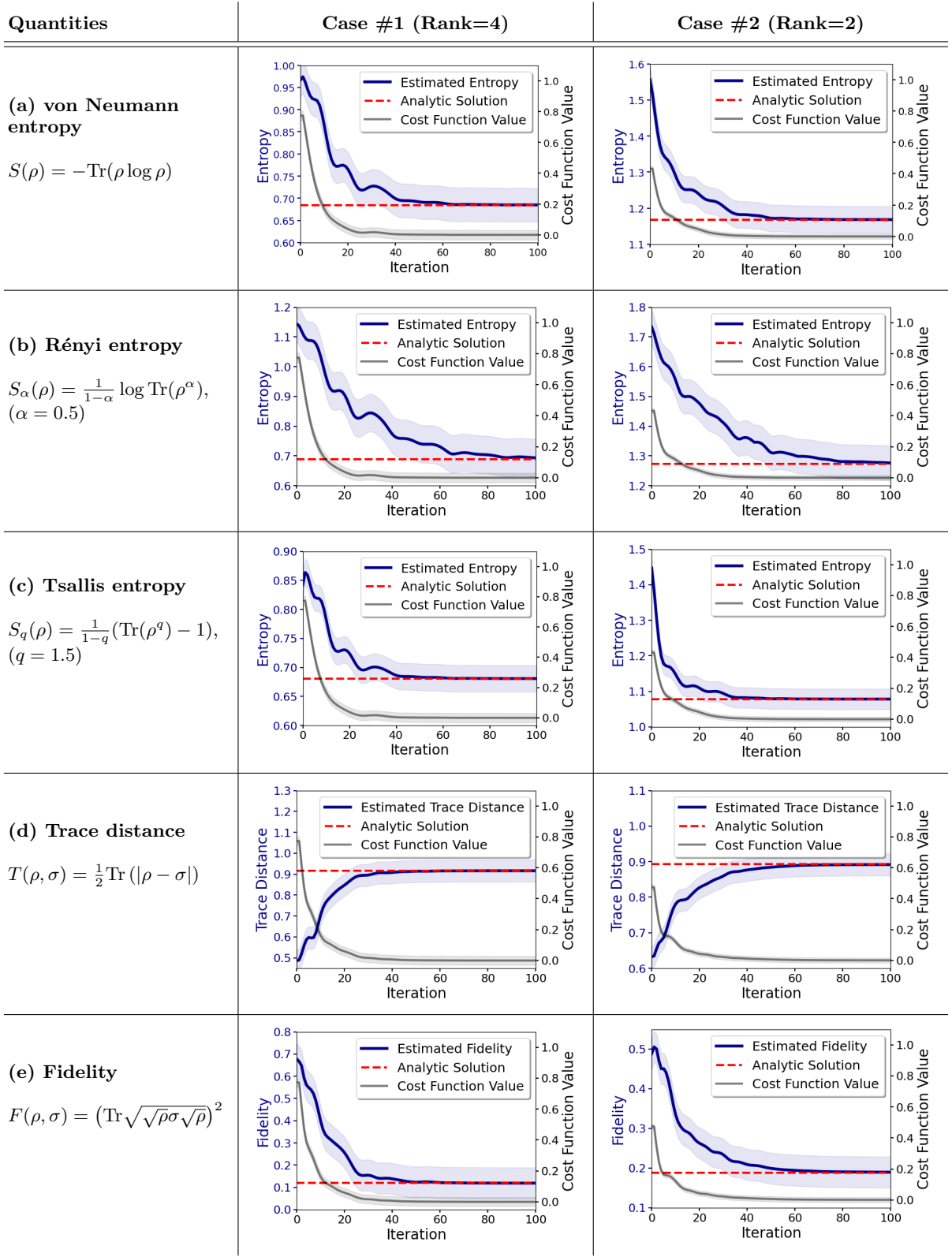

\renewcommand{\arraystretch}{1.8}
\centering
\resizebox{.9\linewidth}{!}{
\begin{tabular}{m{3.8cm}|c|c}
% \hline
\textbf{Quantities} & \textbf{Case \#1 (Rank=4)} & \textbf{Case \#2 (Rank=2)} \\
\hline\hline
\raggedright \textbf{(a) von Neumann entropy \\~\\ $S(\rho)=-\T(\rho\log\rho)$}  & \fourvonone & \fourvontwo \\
\hline
\raggedright\textbf{(b) R\'enyi entropy \\~\\ $S_\alpha(\rho)=\frac{1}{1-\alpha}\log\T(\rho^\alpha),$ \\ $(\alpha=0.5)$}  & \fourrenyione & \fourrenyitwo \\
\hline
\raggedright\textbf{(c) Tsallis entropy \\~\\ $S_q(\rho)=\frac{1}{1-q}(\T(\rho^q)-1),$ \\ $(q=1.5)$} & \fourtsallisone & \fourtsallistwo \\
\hline
\raggedright\textbf{(d) Trace distance \\~\\ $T(\rho, \sigma) = \frac{1}{2}\T\left(|\rho-\sigma|\right)$ \\ ~}  & \fourtraceone & \fourtracetwo \\
\hline
\raggedright \textbf{(e) Fidelity \\~\\ $F(\rho,\sigma) = \left(\T\sqrt{\sqrt{\rho}\sigma\sqrt{\rho}}\right)^2$ \\ ~}  & \fourfidelityone & \fourfidelitytwo
% \hline
\end{tabular}
}
\caption{\textbf{Four-qubit examples.} Convergence of the quantum neural estimation algorithm for the von Neumann entropy, R\'{e}nyi entropy ($\alpha=0.5$), Tsallis entropy ($q=1.5$), trace distance, and fidelity. Two input instances are randomly generated for each of these quantities. For each instance, the estimated values are presented in blue, the analytic solutions in red, and the cost function values of DEQNN in gray.}
\label{fig:fourqubit}
\end{figure*}

%%%%%%%%%%%%%% 6큐빗 결과 표
\begin{figure*}
\renewcommand{\arraystretch}{1.8}
\centering
\resizebox{.9\linewidth}{!}{
\begin{tabular}{m{3.8cm}|c|c}
% \hline
\textbf{Quantities} & \textbf{Case \#1 (Rank=4)} & \textbf{Case \#2 (Rank=2)} \\
\hline\hline
\raggedright \textbf{(a) von Neumann entropy \\~\\ $S(\rho)=-\T(\rho\log\rho)$}  & \sixvonone & \sixvontwo \\
\hline
\raggedright\textbf{(b) R\'{e}nyi entropy \\~\\ $S_\alpha(\rho)=\frac{1}{1-\alpha}\log\T(\rho^\alpha),$ \\ $(\alpha=0.5)$}  & \sixrenyione & \sixrenyitwo \\
\hline
\raggedright\textbf{(c) Tsallis entropy \\~\\ $S_q(\rho)=\frac{1}{1-q}(\T(\rho^q)-1),$ \\ $(q=1.5)$} & \sixtsallisone & \sixtsallistwo \\
\hline
\raggedright\textbf{(d) Trace distance \\~\\ $T(\rho, \sigma) = \frac{1}{2}\T\left(|\rho-\sigma|\right)$ \\ ~}  & \sixtraceone & \sixtracetwo \\
\hline
\raggedright \textbf{(e) Fidelity \\~\\ $F(\rho,\sigma) = \left(\T\sqrt{\sqrt{\rho}\sigma\sqrt{\rho}}\right)^2$ \\ ~}  & \sixfidelityone & \sixfidelitytwo
% \hline
\end{tabular}
}
\caption{\textbf{Six-qubit examples.} Convergence of the quantum neural estimation algorithm for the von Neumann entropy, R\'{e}nyi entropy ($\alpha=0.5$), Tsallis entropy ($q=1.5$), trace distance, and fidelity. Two input instances are randomly generated for each of these quantities. For each instance, the estimated values are presented in blue, the analytic solutions in red, and the cost function values of DEQNN in gray.}
\label{fig:sixqubit}
\end{figure*}

Figure~\ref{fig:fourqubit} shows the experiments for 4-qubit states, and Figure~\ref{fig:sixqubit}  shows the experiments for 6-qubit states. Each figure includes two cases: Case \#1 represents the scenario where the rank is 4, and Case \#2 represents the scenario where the rank is 2. These cases illustrate whether the convergence of DEQNN accurately estimates the ground-truth values and compare the results based on the number of qubits and the rank of the states.
\begin{itemize}
    \item \textbf{von Neumann entropy:} Figure~\ref{fig:fourqubit}-(a) and Figure~\ref{fig:sixqubit}-(a) present the results for the von Neumann entropy on 4-qubit and 6-qubit states. In Case \#1 and Case \#2, the given quantum states were compressed into a 2-qubit state and a 1-qubit state, respectively. Stable convergence occurred at around 100 epochs for 4-qubit states and 500 epochs for 6-qubit states.
    \item \textbf{R\'{e}nyi entropy:} Figure~\ref{fig:fourqubit}-(b) and Figure~\ref{fig:sixqubit}(b) present the results for the R\'{e}nyi entropy on 4-qubit and 6-qubit states, where $\alpha = 0.5$. The number of original and compressed qubits and the rank are the same as in the cases of von Neumann entropy. Stable convergence occurred at around 100 epochs for 4-qubit states and 500 epochs for 6-qubit states.
    \item \textbf{Tsallis entropy:} Figure~\ref{fig:fourqubit}-(c) and Figure~\ref{fig:sixqubit}-(c) present the results for Tsallis entropy on 4-qubit and 6-qubit states, where $q = 1.5$. The number of original and compressed qubits and the rank are the same as in the cases of von Neumann and R\'{e}nyi entropies. Stable convergence occurred at around 100 epochs for 4-qubit states and 500 epochs for 6-qubit states.
    \item \textbf{Trace distance:} Figure~\ref{fig:fourqubit}-(d) and Figure~\ref{fig:sixqubit}-(d) present the results for trace distance on 4-qubit and 6-qubit states. We compressed the given 4-qubit states into 2-qubit states in Case \#1, and 6-qubit states into 3-qubit states in Case \#2. Stable convergence occurred at around 100 epochs for 4-qubit states and 800 epochs for 6-qubit states.
    \item \textbf{Fidelity:} Figure~\ref{fig:fourqubit}-(e) and Figure~\ref{fig:sixqubit}-(e) present the results for fidelity on 4-qubit and 6-qubit states. The conditions regarding the number of qubits and the rank are the same as those for trace distance. Stable convergence occurred at around 100 epochs for 4-qubit states and 800 epochs for 6-qubit states.
\end{itemize}

\section{Discussion} \label{sec:discussion}%%%%%% COMPLETE %%%%%%%
In this work, we proposed disentangling quantum neural networks (DEQNN), a quantum neural network structure for dimension reduction that can be used for efficient estimation of quantum entropies and distance measures. The continuity bounds and disentanglement serve crucial roles in designing the cost function of DEQNN. We prove that optimizing the cost function of DEQNN leads to the preservation of quantum entropies and distance measures in a smaller partial state. We can use the partially preserved state for estimation. {This scheme is useful for the common cases when we have to iteratively estimate the generalized entropy to figure our the proper parameter of the system.} Furthermore, our method can be generalized to estimate quantities for an arbitrary number of states. We propose that DEQNN is particularly beneficial when estimating quantities for a large number of less complex quantum states. Our results are supported by numerical simulations. However, since our algorithm does not propose a generalized statistical mechanics, caution is warranted regarding its applicability—even if stable results are observed in simulations. As discussed in Section~\ref{sec:background}, attention must be given to the lack of a thermodynamic limit and Lesche stability.

However, further investigation is needed into the trainability and expressivity of our method. Exploring applications of DEQNN beyond von Neumann entropy, Rényi entropy, Tsallis entropy, trace distance, and fidelity, as well as expanding our proofs to other measures, could be an interesting avenue for future research. Specifically, examining how our approach could be applied to entropy estimations with distinct mathematical structures, such as Havrda-Charvát entropy, is an important direction. For instance, while both Havrda-Charvát entropy and Rényi entropy can be viewed as specific forms of the Nagumo-Kolmogorov functional, they differ in mathematical structure—particularly in terms of differentiability and continuity at boundary limits. As a result, attempts to estimate both quantities with a unified algorithm may lead to inaccuracies or instability. Although our study does not address Havrda-Charvát entropy, it is necessary to investigate dimensionality reduction methods that encompass this type of entropy, as well as the potential existence of a unified algorithm for physical quantity estimation (or the fundamental reasons for its non-existence).

{Another important topic for future research is related to the complexity of the quantum neural network, ensuring convergence for the same error bound $\epsilon$ as the number of qubits and rank changes. If the number of layers required to achieve a specific error bound can be shown to be $\mathcal{O}(\text{poly}(\log n))$ or less using a constant-depth ansatz, our learning algorithm may avoid barren plateaus. Proving this rigorously would provide key insights into the trainability of the network, though this remains a challenging and highly valuable topic for future research. One possible approach is to leverage recent studies that construct approximate unitary designs using shallow circuits~\cite{schuster2024random}. Following the methodology proposed in that study, we could consider constructing the ansatz by attaching local random unitaries to $\mathcal{O}(\log n)$-sized or $\mathcal{O}(\text{poly}(\log n))$-sized patches of qubits, thereby creating a global random unitary. If this unitary can describe an $\epsilon$-approximate disentangling unitary, then our quantum neural network could be implemented with low complexity.}

\section*{Data Availability Statement}
{The data and software that support the findings of this study
can be found in the following repository: \texttt{https://github.com/tfoseel/DEQNN}.}

%%%%%
\section*{Acknowledgments}
We acknowledge Daniel Kyungdeock Park and Ju-Young Ryu for their valuable discussions. This work was supported by the National Research Foundation of Korea (NRF) through a grant funded by the Ministry of Science and ICT (NRF-2022M3H3A1098237). This work was partially supported by the Institute for Information \& Communications Technology Promotion (IITP) grant funded by the Korean government (MSIP) (No. 2019-0-00003; Research and Development of Core Technologies for Programming, Running, Implementing, and Validating of Fault-Tolerant Quantum Computing Systems), and Korea Institute of Science and Technology Information (KISTI: P24021).

\section*{Author Contributions}
M.S., S.L., and J.L. contributed equally to this work, undertaking the primary responsibilities, including the development of the main ideas, mathematical proofs, initial drafting, and revisions of the paper. M.L. and D.J. contributed to the numerical simulations and paper preparation. H.Y. conducted the analysis from a physical perspective, including the physical limitations of the algorithm, and participated in the revision of the paper. K.J. supervised the research. All authors discussed the results and contributed to the final paper.

\appendix

\section{Proofs of Continuity Bounds}\label{sec:proofcontbound}
We complete the proofs of continuity bounds in Table~\ref{table:cont_bound}. Most of the continuity bounds are proved in the references. We tighten the bounds of R\'{e}nyi and Tsallis entropy in terms of rank and prove the bounds of fidelity using the triangle inequality.

\subsection{von Neumann entropy}
The continuity bounds of von Neumann entropy are proved by Audenaert~\cite{audenaert2007sharp}. Audenaert proved that for $d$-dimensional probability distributions $p=\{p_i\}$ and $q=\{q_i\}$, following holds:
\begin{align}
& |H(p)-H(q)|  \nonumber \\
& \leq D\log(d-1)-D\log D-(1-D)\log (1-D),
\end{align}
where $D=\frac{1}{2}\sum^d_{i=1}|p_i-q_i|$, $H(p) = \sum_i p_i\log p_i$, $H(q) = \sum_i q_i\log q_i$. We can easily check that replacing dimension $d$ with rank $r$, which is the number of non-zero values of the distribution, does not change $D, H(p)$ or $H(q)$. Therefore, it still satisfies Audenaert's inequality. For quantum states $\rho$ and $\sigma$, whose eigenvalues are $\{p_i\}$, $\{q_i\}$, $D \leq T(\rho, \sigma) = T$, $H(p) = S(\rho)$ and $H(q) = S(\sigma)$ hold. We can deduce the equation below:
\begin{align}
& |T_{\alpha}(\rho)-T_{\alpha}(\sigma)| \nonumber \\
& \leq T\log(r-1)-T\log T-(1-T)\log (1-T).
\end{align}

\subsection{R\'{e}nyi and Tsallis entropies}
The continuity bounds of R\'{e}nyi case can follow Tsallis entropy. Therefore, we first focus on the proof of the continuity bound on Tsallis entropy. We modified the proof by Chen \emph{et al.}~\cite{chen2017sharp}. For clarification, we denote Tsallis entropy as $T_{\alpha}$ and R\'{e}nyi entropy as $S_{\alpha}$ only for this section. Chen proved that for $\alpha < 1$, $d$-dimensional probability distributions $p=\{p_i\}$ and $q=\{q_i\}$, the following holds:
\begin{equation}
|H_{\alpha}(p)-H_{\alpha}(q)| \leq \frac{(1-D)^{\alpha}-1+(d-1)^{1-\alpha}D^{\alpha}}{1-\alpha},
\end{equation}
where $D=\frac{1}{2}\sum^d_{i=1}|p_i-q_i|$, $H_{\alpha}(p) = \frac{1}{1-\alpha}\sum_i (p_i)^{\alpha}-1$ and $H_{\alpha}(q) = \frac{1}{1-\alpha}\sum_i (q_i)^{\alpha}-1$. We  can easily see that replacing dimension $d$ with rank $r$, which is the number of non-zero values of the distribution, does not change $D$, $H_{\alpha}(p)$, or $H_{\alpha}(q)$. Therefore, it still satisfies Chen's inequality. 

For quantum states $\rho$ and $\sigma$, whose eigenvalues are $\{p_i\}$ and $\{q_i\}$, respectively, $D \leq T(\rho, \sigma) = T$, $H_{\alpha}(p) = T_{\alpha}(\rho)$, and $H_{\alpha}(q) = T_{\alpha}(\sigma)$ hold. We can deduce the equation as follow:
\begin{equation}
|T_{\alpha}(\rho)-T_{\alpha}(\sigma)| \leq \frac{(1-T)^{\alpha}-1+(r-1)^{1-\alpha}T^{\alpha}}{1-\alpha}.
\end{equation}

For $\alpha > 1$, Raggio~\cite{raggio1995properties} proved the continuity bound for Tsallis entropy. Thus, the continuity bound for Tsallis entropy in Table~\ref{table:cont_bound} is proved.

The difference in R\'{e}nyi entropies can be bounded by Tsallis entropy using the mean value theorem:
\begin{equation}
|S_{\alpha}(\rho)-S_{\alpha}(\sigma)| \leq \frac{|T_{\alpha}(\rho)-T_{\alpha}(\sigma)|}{\min\{\T(\rho^{\alpha}),\T(\sigma^{\alpha})\}}.
\end{equation}

Since $\T(\rho^{\alpha}),\T(\sigma^{\alpha}) \geq r^{1-\alpha}$, the continuity bound for R\'{e}nyi entropy in Table~\ref{table:cont_bound} is proved.

\subsection{Trace Distance and Fidelity}
The continuity bound of trace distance is established using the triangle inequality for trace distances. Below, we elaborate on the proof of the continuity bound of fidelity. For any $x,y \in [0, 1]$, $|x-y| \leq |\arccos x - \arccos y|$ is obtained by the mean value theorem. Combining this with the triangle inequality for fidelity angles in equation~(\ref{eq:2}), we get:
\begin{equation}
|F(\rho_1,\sigma_1)-F(\rho_2,\sigma_2)| \leq A(\rho_1,\rho_2) + A(\sigma_1,\sigma_2),
\end{equation}
where $A = \arccos F$. Using the inequality $F(\rho, \sigma) \leq 1-T(\rho, \sigma)^2$, we can complete the proof.

% rank polynomially propotional

\section{Proof of Lemma 1}\label{sec:prooflem1}
\begin{proof}
Let $\rho_{AB}^* = U\rho_{AB}U^{\dagger}$, $\rho_A^* = \T_B(\rho_{AB}^*)$ and $\rho_B^* = \T_A(\rho_{AB}^*)$. Using the definition of Hilbert-Schmidt trace distance, $D = D_{HS}(\rho_{AB}^*, \ket{0}_A\bra{0}_A\otimes\rho_B^*)$ can be represented as:
\begin{equation}
\T({\rho_{AB}^*}^2)+\T(\rho^2_B)-2\T(\rho_{AB}^*\ket{0}_A\bra{0}_A\otimes\rho_B^*).
\end{equation}
$\T(\rho_{AB}^*\ket{0}_A\bra{0}_A\otimes\rho_B^*)$ can be rearranged as: 
\begin{equation}
\T(\rho_{AB}^* I_A\otimes\rho_B^*)-\T(\rho_{AB}^* (I_A-\ket{0}\bra{0}_A)\otimes\rho_B^*).
\end{equation}
Next, by exploiting
\begin{equation}
\T(\rho_{AB}^* I_A\otimes\rho_B^*) = \T({\rho_B^*}^2),
\end{equation}
and 
\begin{align}
& \T(\rho_{AB}^* (I_A-\ket{0}\bra{0}_A)\otimes\rho_B^*) \\
& ~~~\leq \T(\rho_{AB}^* (I_A-\ket{0}\bra{0}_A)\otimes I_B) \\
& ~~~ = 1-\T(\rho_{AB}^* -\ket{0}\bra{0}_A\otimes I_B) \leq \epsilon,
\end{align}
we have $D \leq \T({\rho_{AB}^*}^2)-\T({\rho_B^*}^2)+2\epsilon$. The subadditivity of Tsallis entropy~\cite{audenaert2007subadditivity} implies that
\begin{equation}
\T({\rho_{AB}^*}^2)-\T({\rho_{B}^*}^2) \leq 1-\T({\rho_{A}^*}^2).
\end{equation}
Let $\rho_A^* = \sum_i p_i\ket{\psi_i}\bra{\psi_i}$ and $p_{\text{max}} = \max_i p_i$. Since $\bra{0}\rho_A\ket{0} = \T(\rho_{AB}^*\ket{0}_A\bra{0}_A\otimes I_B) \geq 1-\epsilon$, we have
\begin{equation}
p_{\text{max}} \geq \sum_i p_i \inn{0}{\psi_i}^2 \geq 1-\epsilon.
\end{equation}
Then,
\begin{equation}
\T({\rho_A^*}^2) = \sum_i p^2_i \geq p_{\text{max}}^2 \geq (1-\epsilon)^2.
\end{equation}
So, we can finally deduce that:
\begin{equation}
D \leq 2\epsilon-\epsilon^2+2\epsilon \leq 4\epsilon.
\end{equation}
Now, we use the relation between Hilbert-Schmidt trace distance and trace distance~\cite{coles2019strong}:
\begin{equation}
T(\rho, \sigma)^2 \leq \frac{\text{rank}(\rho)\text{rank}(\sigma)D_{HS}(\rho, \sigma)}{\text{rank}(\rho)+\text{rank}(\sigma)} \leq \text{rank}(\rho)D.\\
\end{equation}
Therefore,
\begin{equation}
T(\rho_{AB}^*, \ket{0}_A\bra{0}_A\otimes\rho_B^*) \leq 2\sqrt{r\epsilon}.
\end{equation}
\end{proof}

\section{Proofs of Theorems 1 and 2}\label{sec:proofthms}
We prove the inequalities on Table~\ref{table:ineq}. We apply \textbf{Lemma~\ref{lemma:td}} and then rearrange the continuity bounds in Table~\ref{table:cont_bound}. As mentioned before, the purpose of \textbf{Theorems~\ref{thm:1} and \ref{thm:2}} is to show that disentanglement preserves quantum entropies and distance measures with polynomial error. Therefore, the inequalities in Table~\ref{table:ineq} are less stringent than the continuity bounds for better understanding.
\begin{proof}
For $f \in F_1=\{S,S_{\alpha},S_q\}$, it is known that
\begin{align}
f(U\rho_{AB}U^{\dagger}) &= f(\rho_{AB}), \\
f(\ket{0}\bra{0}_A\otimes\rho_B^*) &= f(\rho_B^*).
\end{align}
Thus,
\begin{equation}
|f(\rho_{AB})-f(\rho_B^*)| = |f(U\rho_{AB}U^{\dagger})-f(\ket{0}\bra{0}_A\otimes\rho_B^*)|.
\end{equation}
Hence, we can apply the continuity bound and set $T_{\rho} = 2\sqrt{r\epsilon}$ in Table~\ref{table:cont_bound}. For similar reasons, for $f \in F_1=\{S,S_{\alpha},S_q\}$
\begin{align}
& |f(\rho_{AB},\sigma_{AB})-f(\rho_B^*,\sigma_B^*)| \nonumber \\
& = |f(U\rho_{AB}U^{\dagger}, U\sigma_{AB}U^{\dagger})-f(\ket{0}\bra{0}_A\otimes\rho_B^*, \ket{0}\bra{0}_A\otimes\sigma_B^*)|
\end{align}
holds. Hence, we can apply the continuity bound and set $T_{\rho} = 2\sqrt{r\epsilon}$, $T_{\sigma} = 2\sqrt{r\epsilon}$ in Table~\ref{table:cont_bound}.

Now we need to rearrange the continuity bounds into a simpler form in \textbf{Theorems~\ref{thm:1} and \ref{thm:2}}. The rearrangement of R\'{e}nyi, Tsallis entropy, and trace distance is straightforward, so we elaborate on von Neumann entropy and fidelity.

Using elementary algebra, $\log(r-1) \leq \sqrt{r}$, $x\log\frac{1}{x} \leq \sqrt{x}$ and $(1-x)\log(1-x) \leq x$ $(x \ll 1)$ can be easily shown. Therefore, we can deduce the results in Table~\ref{table:ineq} by applying $T_{\rho} = \sqrt{\epsilon}$ to the continuity bounds of von Neumann entropy in Table~\ref{table:cont_bound}. By using the Taylor series of $\arccos$, we get $\arccos(1-x) \leq \frac{\pi}{2}x$. Using this and applying $T_{\rho} = 2\sqrt{r\epsilon}$ and $T_{\sigma} = 2\sqrt{r\epsilon}$, the inequality for fidelity in Table~\ref{table:ineq} is concluded.
\end{proof}

\section{Proof of Lemma 2}\label{sec:prooflem2}
\begin{proof}
For convenience, let $\rho = \rho_A(\theta) = \sum_i p_i\ket{\psi_i}\bra{\psi_i}$ and $\sigma = \sigma_A(\theta) = \sum_i q_i\ket{\phi_i}\bra{\phi_i}$, where $p_1 \geq p_2 \geq \cdots$ and $q_1 \geq q_2 \geq \cdots$, respectively. Also, assume that $C(\theta) < \epsilon$. 

Then, $\T(\rho^2)+\T(\sigma^2)+2\T(\rho\sigma) > 4-4\epsilon$. Hence, $\T(\rho^2) > 1-4\epsilon$, $\T(\sigma^2) > 1-4\epsilon$ and $\T(\rho\sigma) > 1-2\epsilon$. 

We can observe that
\begin{equation}
p^2_1+(1-p_1)^2 = \sum_i p^2_i = \T(\rho^2) > 1-4\epsilon, 
\end{equation}
then $p_1 > \frac{1+\sqrt{1-8\epsilon}}{2} > 1-4\epsilon$. Hence, $\ket{\psi_1}$ satisfies
\begin{equation}
\bra{\psi_1}\rho\ket{\psi_1} > 1-4\epsilon,
\end{equation}
and with the same logic, $\ket{\phi_1}$ satisfies
\begin{equation}
\bra{\phi_1}\sigma\ket{\phi_1} > 1-4\epsilon.
\end{equation}

Now, we investigate the relation between $\ket{\psi_1}$ and $\ket{\phi_1}$ using $\T(\rho\sigma) > 1-2\epsilon$. That is,
\begin{equation}
\sum_i p_i q_i \inn{\psi_i}{\phi_i}^2 = \T(\rho\sigma) > 1-2\epsilon
\end{equation}
holds. Since $\sum_{i\geq 2} p_i < 4\epsilon$ and $\sum_{i\geq 2} q_i < 4\epsilon$,
\begin{equation}
\inn{\psi_1}{\phi_1}^2 + 8\epsilon > \sum_i p_i q_i \inn{\psi_i}{\phi_i}^2 > 1-2\epsilon. 
\end{equation}
Thus, we can finally deduce that
\begin{equation}
\bra{\psi_1}\sigma\ket{\psi_1} \geq q_1\inn{\psi_1}{\phi_1}^2 > 1-14\epsilon,
\end{equation}
where $\ket{\psi_1}$ corresponds to $\ket{\psi}$ in \textbf{Lemma~\ref{lemma:cost}}, and we conclude the proof.
\end{proof}

\section{Proof of Proposition 1}\label{sec:proofprop1}
\begin{proof}    
    Let the quantum state $\rho_{AB}$ be expressed as $\rho_{AB} = \sum_{i=1}^r p_i |\psi_i\rangle\langle \psi_i|$, where $\{|\psi_i\rangle\}_{i=1}^r$ are orthogonal states associated with probabilities $\{p_i\}$. There exist $r$ orthogonal states acting on $\log r$ qubits, denoted by $\{|\phi_i\rangle\}_{i=1}^r$. For each $i$, define the following extended states:
    \begin{align}
        |\phi_i'\rangle = |\phi_i\rangle |0\rangle^{\otimes (n - \log r)},
    \end{align}
    so that $\{|\phi_i'\rangle\}_{i=1}^r$ are orthogonal states acting on the full $n$-qubit Hilbert space. Next, extend $\{|\psi_i\rangle\}_{i=1}^r$ and $\{|\phi_i'\rangle\}_{i=1}^r$ to complete orthonormal bases $\{|\psi_i\rangle\}_{i=1}^{2^n}$ and $\{|\phi_i'\rangle\}_{i=1}^{2^n}$ for the $n$-qubit Hilbert space by applying the Gram-Schmidt orthogonalization process.\\
    Now define $U = \sum_{i=1}^{2^n} |\phi_i'\rangle \langle \psi_i|$. This operator $U$ satisfies $UU^\dagger = U^\dagger U = I$, confirming that $U$ is unitary. Finally, applying $U$ to $\rho_{AB}$ yields:
    \begin{align}
        U \rho_{AB} U^\dagger &= \sum_{i=1}^r p_i |\phi_i'\rangle \langle \phi_i'| \\
        &=  \left( \sum_{i=1}^r p_i |\phi_i\rangle \langle \phi_i| \right) \otimes \left( |0\rangle \langle 0| \right)^{\otimes (n - \log r)}\\
        &= \rho_B^* \otimes \left( |0\rangle \langle 0|_{A} \right)^{\otimes (n - \log r)},
    \end{align}
which demonstrates perfect disentanglement, where system $A$ is separated as $(n-\log r)$-qubit subsystem and system $B$ as a $\log r$-qubit subsystem.
\end{proof}

\section{Proof of Proposition 2}\label{sec:proofprop2}
\begin{proof}    
    Let two quantum states $\rho_{AB}$ and $\sigma_{AB}$ be expressed as
    \begin{align}
        \rho_{AB} = \sum_{i=1}^{r_1} p_i |v_i\rangle\langle v_i|,~\text{and}~\sigma_{AB}=\sum_{i=1}^{r_2}q_i |w_i\rangle\langle w_i|.    
    \end{align}
    Define ${S}=\text{span}\{|v_1\rangle, \ldots, |v_{r_1}\rangle, |w_1\rangle, \ldots, |w_{r_2}\rangle\}$, and let $\{\psi_i\}_{i=1}^k$ be a basis for ${S}$, where $k \le r$.\\
    There exist $r$ orthogonal states acting on $\log r$ qubits, denoted by $\{|\phi_i\rangle\}_{i=1}^r$. For each $i$, define the following extended states:
    \begin{align}
        |\phi_i'\rangle = |\phi_i\rangle |0\rangle^{\otimes (n - \log r)},
    \end{align}
    so that $\{|\phi_i'\rangle\}_{i=1}^r$ are orthogonal states spanning the full $n$-qubit Hilbert space. Next, extend $\{|\psi_i\rangle\}_{i=1}^k$ and $\{|\phi_i'\rangle\}_{i=1}^r$ to complete orthonormal bases $\{|\psi_i\rangle\}_{i=1}^{2^n}$ and $\{|\phi_i'\rangle\}_{i=1}^{2^n}$ for the $n$-qubit Hilbert space by applying the Gram-Schmidt orthogonalization process.\\
    Now define $U = \sum_{i=1}^{2^n} |\phi_i'\rangle \langle \psi_i|$. This operator $U$ satisfies $UU^\dagger = U^\dagger U = I$, confirming that $U$ is unitary.\\
    Since $U|\psi_i\rangle=|\phi_i'\rangle=|\phi_i\rangle|0\rangle^{\otimes{(n-\log r)}}$, there exists a state $|v_i'\rangle$ such that:
    \begin{align}
        U|v_i\rangle &= U \sum_{j=1}^k c_{ij}|\psi_j\rangle\\
        &= \left(\sum_{j=1}^k c_{ij} |\phi_j\rangle\right) |0\rangle^{\otimes{(n-\log r)}}\\
        &= |v_i'\rangle|0\rangle^{\otimes{(n-\log r)}}.
    \end{align}
    Similarly, there exists a state $|w_i'\rangle$ such that $U|w_i\rangle=|w_i'\rangle|0\rangle^{\otimes{(n-\log r)}}$. Finally, applying $U$ to $\rho_{AB}$ and $\sigma_{AB}$ yields:
    \begin{align}
        U\rho_{AB} U^\dagger &= \left(\sum_{i=1}^{r_1} p_i |v_i'\rangle\langle v_i'|\right)\otimes\left(|0\rangle\langle 0|\right)^{\otimes n-\log r} \\
        &= \rho_B^* \otimes \left( |0\rangle \langle 0|_{A} \right)^{\otimes (n - \log r)}~\text{and}, \\
        U\sigma_{AB} U^\dagger &= \left(\sum_{i=1}^{r_2} q_i |w_i'\rangle\langle w_i'|\right)\otimes\left(|0\rangle\langle 0|\right)^{\otimes n-\log r} \\
        &= \sigma_B^* \otimes \left( |0\rangle \langle 0|_{A} \right)^{\otimes (n - \log r)},
    \end{align}
    which demonstrates perfect disentanglement, where system $A$ is separated as $(n-\log r)$-qubit subsystem and system $B$ as a $\log r$-qubit subsystem.
\end{proof}

\section{Further Details on Numerical Simulations}\label{app:simulation}
{In this section, we provide a more detailed explanation of the process for finding our $\epsilon$-approximate disentangling unitary $U$, including the steps involved in the numerical simulation. As described in the main text, our goal is to identify $U$ that enables dimensionality reduction while preserving the quantum information quantities. We propose achieving this using a variational quantum algorithm approach implemented through a quantum neural network (or parametrized quantum circuit). Here, the variational quantum algorithm is also referred to as a learning algorithm, as we use these terms interchangeably to describe the process of optimizing the parameters of the rotation gates in the quantum neural network through gradient descent to solve the optimization problem for the objective function and approximate the desired $U$. Our learning algorithm is summarized in the pseudocode of \textbf{Algorithm 1}.}

\begin{algorithm}[H]
    \caption{Variational algorithm for finding (learning) $\epsilon$-approximate disentangling unitary $U$}
    {\textbf{Input:}
    \begin{itemize}
        \item $\mathcal{O}(1)$-depth parametrized quantum circuit with randomly initialized parameter vector $\theta^0$
        \item Copies of $m$ quantum states $\{\rho_i\}_{i=1}^m$
        \item Number of iterations $n_\text{step}$
        \item Learning rate $\alpha$
        %\item (Optional) Early stopping threshold $\varepsilon_\text{stop}$
    \end{itemize}
    \textbf{Repeat the following process for $n_\text{step}$ iterations:}
    \begin{enumerate}
    \item Evaluate $\nabla C_m(\theta^i)$, as defined in equation~\eqref{eq:gen_cost}, using parameter-shift rule at the $i$-th step.
    \item Update parameter vector $\theta^{i+1} = \theta^i - \alpha \nabla_{\theta} C(\theta^i)$.
    \end{enumerate}
    \textbf{Output:} $\epsilon$-approximate disentangling unitary $U$ composed of the optimized parameter vector $\theta^{n_\text{step}}$}
\end{algorithm}

{The detailed explanation of each part of the algorithm is as follows. Additionally, you can find further details related to the implementation through the GitHub repository specified in the Data Availability section.
\begin{itemize}
    \item \textbf{Parameter vector and initialization:} The parameter vector at the $i$-th step can be represented as an $aL$-dimensional vector,
        \begin{align}
            {\theta}^i &= (\theta^i_{(j,k)})_{1 \le j \le L, 1 \le k \le a} \\
            &= (\theta^i_{(1,1)}, \ldots, \theta^i_{(1,a)}, \ldots, \theta^i_{(L,1)}, \ldots, \theta^i_{(L,a)})
        \end{align}
    where $a$ denotes the number of parameters in a single ansatz, and $L$ is the number of layers. For initialization, $\theta^0_{(j, k)}$ is initially sampled from the interval $[0, 2\pi)$. The design of the ansatz we used can be found in Figure~\ref{fig:ansatz}. For an $n$-qubit system, a single layer contains $6n$ rotation gates, so when using $L$ layers, the dimension of the parameter vector is $6nL$ in our case. The ansatz design can be constructed using (trainable) arbitrary rotation gates and (constant) 2-qubit gates, and the dimensionality of the parameters depends on the specific form of the design.
    \item \textbf{Preparation of input quantum states:} To benchmark our algorithm, we need to generate quantum states for estimating quantum information quantities. To achieve this, we sample a unitary operator $U_\text{init}$ from the unitary Haar measure. The sampled unitary operator is applied to a quantum state initialized to zero. Subsequently, we trace out part of the system to the desired extent, obtaining a mixed state. These randomly generated reduced density matrices are then used to validate the performance of our algorithm. (This process can be implemented using the \texttt{density\_matrix} and \texttt{QubitUnitary} functions from PennyLane~\cite{bergholm2018pennylane}, together with the \texttt{random\_unitary} function from the \texttt{quantum\_info} module of Qiskit.)
    \item \textbf{Gradient evaluation with the parameter-shift rule:} The problem we aim to solve is, strictly speaking, to find the theoretical optimal parameter vector $\theta_\text{exact} = \arg\min C_m(\theta)$. However, since solving this optimization problem exactly is infeasible, our goal is to perform interactive optimization of the parameter vector to find a solution close to the theoretical optimal. Specifically, as demonstrated by \textbf{Lemma~\ref{lemma:cost}} and \textbf{Theorem~\ref{thm:3}} in the main text, if we can make $C_m(\theta)$ smaller than $\epsilon$, the quantum information quantities are theoretically guaranteed to be well-preserved. Therefore, we design an iterative optimization algorithm that minimizes $C_m(\theta)$ through sufficient iterations, enabling the learning of an $\epsilon$-approximate disentangling unitary operator $U$ from the resulting parameter vector.\\
    The analytic gradient of quantum circuits with respect to gate parameters can be computed using the \textit{parameter-shift rule}~\cite{mitarai2018quantum, schuld2019evaluating}. This rule allows for exact gradient calculation by computing the difference between the results of shifting a single gate parameter twice, without altering the structure of the original circuit. Particularly, since the designed $C_m(\theta)$ can be easily computed via the swap test, the calculations required for gradient descent can be implemented very efficiently on quantum hardware.\\
    To explain in more technical detail, consider a parametrized quantum circuit $U(\theta)$, where the circuit can be interpreted as a sequence of $N$ unitary operators and represented as $U(\theta) = \prod_{j=1}^N U(\theta_j)$. For a function that evaluates the expectation value of an observable $\mathcal{M}$, given by $f(\theta) := \langle \mathcal{M} \rangle = \langle 0 | U^\dagger(\theta)\mathcal{M}U(\theta)|0\rangle$, it is known that if the Hermitian generator $\mathcal{G}$ of the unitary operator $U(\theta) = e^{-i\theta\mathcal{G}}$ has at most two eigenvalues $\pm\lambda$, the gradient obtained via the parameter-shift rule is $\nabla_{\theta_j} f(\theta_j) = \lambda \left(f\left(\theta_j + \frac{\pi}{4\lambda}\right) - f\left(\theta_j - \frac{\pi}{4\lambda}\right)\right)$~\cite{schuld2019evaluating}, which can be efficiently computed with just two measurements. If multiple parameters are present, the product rule can be used to calculate the final gradient $\nabla_\theta f(\theta)$. In our ansatz, which consists solely of Pauli rotation gates, $\lambda = \frac{1}{2}$ is determined, simplifying the calculation.\\
    Based on the analytic gradient obtained through this process, the parameters are updated using an initially specified learning rate $\alpha$. By iterating this process for a predetermined number of iterations $n_\text{step}$, a solution close to the theoretical optimal can be achieved. In some cases, if the value of $C_m(\theta)$ falls below a predefined threshold before reaching the maximum number of iterations and convergence is sufficient, the learning process can be terminated early. However, determining the optimal form of the ansatz for more efficient learning remains a non-trivial task.
\end{itemize}}

\bibliographystyle{ieeetr}
\bibliography{citation.bib}

\end{document}